\documentclass[10pt]{article}
\usepackage{amsmath,amssymb,amsthm,mathptmx,mathrsfs,cite}
\numberwithin{equation}{section}
\newtheorem{theorem}{Theorem}[section]
\newtheorem{corollary}[theorem]{Corollary}
\newtheorem{lemma}[theorem]{Lemma}
\newtheorem{proposition}[theorem]{Proposition}
\theoremstyle{definition}

\newtheorem{remark}[theorem]{Remark}

\DeclareMathOperator{\sgn}{sgn}
\def \f {\frac}
\def \k {\kappa}

\begin{document}

\title{Convergence of the dispersion Camassa-Holm $N$-Soliton}

\author{Fengfeng Dong\thanks{School of Mathematical Sciences, Tongji University, Shanghai 200092, P.R. China; 1433494@tongji.edu.cn}\and Lingjun Zhou\thanks{School of Mathematical Sciences, Tongji University, Shanghai 200092, P.R. China; zhoulj@tongji.edu.cn}}
\date{\today}
\maketitle

\begin{abstract}
In this paper, we show that the peakon (peaked soliton) solutions can be recovered from the smooth soliton solutions, in the sense that there exists a sequence of smooth $N$-soliton solutions of the dispersion Camassa-Holm equation converging to the $N$-peakon of the dispersionless Camassa-Holm equation uniformly with respect to the spatial variable $x$ when the dispersion parameter tends to zero. The main tools are asymptotic analysis and determinant identities.

\linespread{2.0}\selectfont

\noindent{\textbf{Keywords}}  Camassa-Holm equation, peakon, soliton, asymptotic analysis, determinant technique

\noindent{\textbf{MSC(2010)}} 35Q51, 35C08, 15A15
\end{abstract}

\section{Introduction}
Consider the Camassa-Holm (CH) equation \cite{CH1993}
\begin{equation}\label{kCH}
u_t+2\omega u_x-u_{xxt}+3uu_x=2u_x u_{xx}+u u_{xxx},
\end{equation}
where $u$ is the fluid velocity in the $x$ direction, $\omega$ is a constant related to the critical shallow water speed, and the subscripts denote partial derivatives. The CH equation models the unidirectional propagation of shallow water waves over a flat bottom \cite{CH1993, CHH1994, Johnson2002, CL2009}.

When the dispersion coefficient $\omega>0$, we refer to \eqref{kCH}  as the dispersion CH equation in this paper, which possesses solitary wave solutions whose limiting form as $\omega\rightarrow 0$ has peaks where the first derivative are discontinuous \cite{CH1993, CHH1994}.  There are abundant literature on the soliton solutions, involving serval techniques in the soliton theory; see \cite{ C2000, C1998isp, RSJ, LZ, LYS, MY1, P1, P2, P3, P4, XZQ} for details.

When  $\omega=0$, the equation \eqref{kCH}, i.e.
\begin{equation}\label{CH eqn}
u_t-u_{xxt}+3uu_x=2u_x u_{xx}+u u_{xxx},
\end{equation}
compared with the KdV equation, has the advantage of admitting peakons(peaked solitons) of the form $ce^{-|x-ct|}$, which capture the main feature of the exact traveling wave solutions of greatest height of the governing equations for water waves in irrotational flow \cite{Constantin2006, Constantin2007, Constantin2012}, and modeling wave breaking \cite{CHH1994, BSS2000, C1998wavebreaking}. Equation \eqref{CH eqn} admits multi-soliton solution of the form \cite{CH1993}
\begin{equation}\label{N-peakon}
    u(x,t)=\sum_{i=1}^Nm_i(t)e^{-|x-x_i(t)|},
\end{equation}
which is called multipeakons. The system of evolution equations for amplitudes $m_i(t)$ and positions $x_i(t)$ of peaks is a completely integrable finite dimensional Hamiltonian system, explicit formulas of which have been obtained by Beals \textsl{e.l.} \cite{BSS1999, BSS2000} via the inverse scattering transform method using the Stieltjes theorem on continued fractions. Blow-up results for certain initial data and global existence theorems for for a large class of initial data have been given by \cite{CH1993, CHH1994} and \cite{C1998blowup}, respectively.

Note that \eqref{CH eqn} is a formal limit of \eqref{kCH}, a natural question is to analyze the relation between the parameter solitons of \eqref{kCH} and the peakons of \eqref{CH eqn}.
Some theoretical and numerical results have been obtained on this question. Li and Olver \cite{LiO1, LiO2} have showed how the single peakon of \eqref{CH eqn} can be recovered as limits of classical solitary wave solutions via the theory of dynamical systems. Johnson \cite{RSJ} and Parker \cite{P1} have claimed that the profile of the analytic one, two-soliton solutions may become peaked for some singular limit involving the wave parameters and dispersion parameter in their settings. These answers obtained are not so satisfactory.

In this paper, we prove rigorously that the non-analytic $N$-peakon (amplitudes $m_i$ in \eqref{N-peakon} are all positive) solutions can be recovered from the analytic $N$-soliton  solutions given by \cite{LZ, LYS}. Our work is motivated by \cite{LiO1, LiO2, MY2}, and establishes the relation between solutions of \eqref{kCH} and \eqref{CH eqn}. We believe that there is certain physics behind this mathematical fact, while we still cannot explain it completely.

The layout of this paper is as follows. In Section 2, we simply review some results on explicit solutions of  \eqref{kCH} and \eqref{CH eqn}, and prove some determinant identities that are important to the proof of convergence.
In Section 3, results on the convergence of the $N$-soliton are proposed.  In Section 4, we prove the  convergence with the aid of asymptotic analysis and determinant technique.

\section{Explicit solutions to Camassa-Holm equation and determinant identities}

 Let $\omega=\k^2$, and without lose of generality, we assume that $\k>0$. The $N$-soliton solutions of \eqref{kCH} given by Li and Zhang \cite{LZ, LYS} reads:
\begin{equation*}
  u(y,t)= \left(\ln\,\Big|\f{f_1}{f_2}\Big|\right)_t, \qquad  x(y,t)= \ln\,\Big|\f{f_1}{f_2}\Big|,
\end{equation*}
where
\begin{equation*}
f_1=\f{W(\Phi_1,\ldots,\Phi_N,e^{\f{y}{2\k}})}{W(\Phi_1,\ldots,\Phi_N)},\qquad
f_2=\f{W(\Phi_1,\ldots,\Phi_N,e^{-\f{y}{2\k}})}{W(\Phi_1,\ldots,\Phi_N)},
\end{equation*}
\[\Phi_i(y,t)=
\begin{cases}
\cosh\xi_i,\qquad i=2l-1,\\
\sinh\xi_i,\qquad i=2l,
\end{cases}
\qquad 1\leq i\leq N,\]
and
\[\xi_i=k_i(y-\k c_it),\quad k_i=\f{1}{2\k}\left(1-\f{2\k^2}{c_i}\right)^{\f{1}{2}},\quad k_1<\cdots<k_N.\]

\begin{remark}
  According to the examples in \cite {LYS1} and our proof in section \ref{proof2.1}, we suppose that $0<2\k k_i<1\,(1\leq i\leq N)$, then the formulas above really give smooth $N$-soliton and  \[c_i= \f{2\k^2}{1-4\k^2k_i^2}>2\k^2\,(1\leq i\leq N).\]
\end{remark}

Let $I=\{i_1<i_2\cdots <i_n\}$ be an n-element subset of the integer interval $[1,N]=\{1,2,\cdots,N\}$, and $J=\{1,2,\cdots,N\}\backslash I$.
Define $a_i$ and $b_i$ as follows:
\[a_i=\f{1+2\k k_i}{1-2\k k_i},\qquad b_i=\f{1-2\k k_i}{1+2\k k_i},\]
and set
\[a_I=\prod_{i\in I}a_i,\qquad b_I=\prod_{i\in I}b_i,\qquad \xi_I=\sum_{i\in I}\xi_i,\qquad\Gamma_I=\prod_{i,j\in I,\,i<j}(k_j-k_i),\]
with the proviso that $\Gamma_{\{i\}}=1$.

We now state some results on $N$-soliton solutions of \eqref{kCH}.

\begin{theorem}\label{new expression}
The $N$-soliton given by Li and Zhang can be expressed as follows,
\begin{equation}\label{u(y,t)}
  u(y,t) = \left(\ln\f{g_1}{g_2}\right)_t,
\end{equation}
\begin{equation}\label{x(y,t)}
     x(y,t)=\frac{y}{\k}+\ln\frac{g_1}{g_2}+\alpha,
\end{equation}
where
\begin{subequations}
\begin{equation}\label{g_1}
 g_1=1+\sum_{n=1}^N\left(\sum_Ib_Ie^{2\xi_I}\prod_{  j\in J,\,l\in I}
   \sgn (j-l)\frac{k_j+k_l}{k_j-k_l}\right),
  \end{equation}
\begin{equation}\label{g_2}
 g_2=1+\sum_{n=1}^N\left(\sum_Ia_Ie^{2\xi_I}\prod_{j\in J,\,l\in I}
   \sgn(j-l)\frac{k_j+k_l}{k_j-k_l}\right).
  \end{equation}
\end{subequations}
\end{theorem}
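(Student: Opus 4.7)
The plan is to reduce the Wronskian quotient $f_1/f_2$ to the explicit series displayed in \eqref{g_1}, \eqref{g_2}. The common denominator $W(\Phi_1,\ldots,\Phi_N)$ cancels in the ratio, so it suffices to expand the two numerators $W(\Phi_1,\ldots,\Phi_N,e^{\pm y/(2\k)})$. Writing $\Phi_i=\tfrac12(e^{\xi_i}+\sigma_i e^{-\xi_i})$ with $\sigma_i=(-1)^{i+1}$, multilinearity of the Wronskian distributes each numerator as a sum over sign patterns $\epsilon\in\{\pm 1\}^N$, and since $\partial_y^j e^{\alpha y}=\alpha^j e^{\alpha y}$, each resulting exponential Wronskian evaluates to $e^{\sum_i\epsilon_i\xi_i+\delta y/(2\k)}$ times the Vandermonde $V(\epsilon_1 k_1,\ldots,\epsilon_N k_N,\delta/(2\k))$, where $\delta=\pm 1$.

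I then reindex by $I=\{i:\epsilon_i=+1\}$ (with complement $J$), pull the common factor $e^{-\xi_{[N]}+\delta y/(2\k)}$ out of every term so that each summand carries $e^{2\xi_I}$, and split the Vandermonde into four blocks: same-$I$ pairs contribute $\Gamma_I$; same-$J$ pairs contribute $(-1)^{\binom{|J|}{2}}\Gamma_J$; the mixed pairs $(l,j)\in I\times J$, once combined with the sign $\prod_{j\in J}\sigma_j$ from the $\sinh$-expansion, produce $\prod_{l\in I,j\in J}\sgn(j-l)(k_j+k_l)$; and the last row of the Vandermonde yields $(2\k)^{-N}\prod_{i\in I}(1-2\k k_i)\prod_{j\in J}(1+2\k k_j)$ for $\delta=+1$ and the $(-1)^N$-conjugate form for $\delta=-1$. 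Normalizing each series by its $I=\emptyset$ term converts the last-row ratio into exactly $b_I$ in the numerator and $a_I$ in the denominator, while the identity $\Gamma_{[N]}=\Gamma_I\Gamma_J\prod_{l\in I,j\in J}\sgn(j-l)(k_j-k_l)$ shows that the block-Vandermonde ratio combines with the mixed-pair product to give exactly $\prod_{l\in I,j\in J}\sgn(j-l)(k_j+k_l)/(k_j-k_l)$, reproducing the coefficients in \eqref{g_1} and \eqref{g_2}.

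The remaining $I$-independent prefactor in $f_1/f_2$ reduces to $e^{y/\k}$ (from the $\delta y/(2\k)$ exponentials) times a $y,t$-independent constant, which we name $e^\alpha$. This gives $\ln(f_1/f_2)=y/\k+\alpha+\ln(g_1/g_2)$, i.e.\ \eqref{x(y,t)}; differentiating in $t$ kills the $y/\k+\alpha$ part and yields \eqref{u(y,t)}. The main obstacle is the sign bookkeeping: the $\prod_{j\in J}\sigma_j$ factor from the $\sinh$-expansion, the $(-1)^{\binom{|J|}{2}}$ from the $J$-block Vandermonde, and the two possible orientations of each mixed pair $(l,j)$ must conspire (consistently in both the numerator and denominator of the ratio) to leave precisely the single $\sgn(j-l)$ factor stated in \eqref{g_1}, \eqref{g_2}; small-$N$ sanity checks, e.g.\ the cases $|I|=1$ inside $N=2$, confirm that all the parities line up.
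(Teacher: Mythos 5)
Your expansion is correct and is essentially the paper's own argument in different clothing. The auxiliary determinant $A$ that the paper introduces in Section 5 is exactly the Wronskian $W(\Phi_1,\ldots,\Phi_N,e^{y/2\k})$ after each $\Phi_i$-column is multiplied by $2e^{\xi_i}$ and the last column by $(2\k)^Ne^{-y/2\k}$; its column splitting into $\alpha_i+\beta_i$ is literally your multilinear expansion of $\Phi_i=\tfrac12(e^{\xi_i}+\sigma_ie^{-\xi_i})$ over sign patterns, and both arguments then evaluate each of the $2^N$ resulting terms as a Vandermonde determinant indexed by $I=\{i:\epsilon_i=+1\}$ and carry out the same block/sign bookkeeping (your identity $\Gamma_{[N]}=\Gamma_I\Gamma_J\prod\sgn(j-l)(k_j-k_l)$ is precisely how the paper converts its coefficient $\Gamma_I\Gamma_J\prod(1\mp2\k k_i)(1\pm2\k k_j)(k_i+k_j)$ into the $b_I$, $a_I$ form of \eqref{g_1}--\eqref{g_2}). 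The one place you diverge is \eqref{x(y,t)}: you read it off from the quoted relation $x=\ln|f_1/f_2|$ together with the prefactor $e^{y/\k}\prod a_i$, whereas the paper re-derives it by first noting $g_1,g_2>0$ (which also lets it drop the absolute value and absorb the $(-1)^N$ in its relation $(-1)^Nf_1/f_2=e^{y/\k}(g_1/g_2)\prod a_i$), concluding that $r=\k f_1f_2/\prod(k_i^2-\tfrac{1}{4\k^2})>0$, inverting the reciprocal transformation, and integrating $\partial x/\partial y=1/r$ with decay at infinity; this extra step is what certifies that the parametric formulas define a genuine coordinate change and exhibits $\alpha$ as an integration constant. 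If your argument is to stand alone you should at least record the positivity of $g_1,g_2$ at that point, but the combinatorial core of your proof coincides with the paper's.
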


 We have left the proof of Theorem \ref{new expression} to section \ref{proof2.1}, the one who is not interested in can skip it without affecting the comprehension of the main results.

\begin{theorem}\label{equivalence}
  For  the smooth $N$-soliton of \eqref{kCH}, the result of Li and Zhang obtained via Darboux transformation is equivalent to Parker's proposed by Hirota bilinear method.
\end{theorem}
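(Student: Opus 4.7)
My strategy is to reduce the equivalence to a single combinatorial identity between the sums in \eqref{g_1}--\eqref{g_2} and Parker's Hirota $\tau$-functions. By Theorem \ref{new expression} the Li--Zhang $N$-soliton already takes the form $u=(\ln g_1/g_2)_t$ with $g_1,g_2$ as in \eqref{g_1}, \eqref{g_2}. Parker's form \cite{P1,P2,P3,P4}, obtained via the Hirota bilinear method, is of the standard exponential--polynomial shape
\[
\tilde g_s=\sum_{I\subseteq[1,N]}\Bigl(\prod_{i\in I}\beta_i^{(s)}\Bigr)e^{2\xi_I}\prod_{\substack{i<j\\ i,j\in I}}\left(\frac{k_j-k_i}{k_j+k_i}\right)^{2},\qquad s=1,2,
\]
where the single-soliton amplitudes $\beta_i^{(1)},\beta_i^{(2)}$ are read off from Parker's parameters. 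Because $u$ and $x$ depend on $g_1,g_2$ only through $\ln(g_1/g_2)$ (up to the additive constant $\alpha$ in \eqref{x(y,t)}), it suffices to verify that $g_s=C_s\tilde g_s$ for some $I$-independent constants $C_s$, matching amplitudes term by term.

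The central algebraic identity, which I would establish first, is
\[
\prod_{\substack{j\in J\\ l\in I}}\sgn(j-l)\,\frac{k_j+k_l}{k_j-k_l}=\Bigl(\prod_{l\in I}r_l\Bigr)\prod_{\substack{i<j\\ i,j\in I}}\left(\frac{k_j-k_i}{k_j+k_i}\right)^{2},
\]
where $r_l:=\prod_{j\in[1,N],\,j\ne l}\sgn(j-l)\,\frac{k_j+k_l}{k_j-k_l}$ depends only on $l$. To prove it, for each fixed $l\in I$ I would extend the $J$-product to all $j\ne l$ by dividing out the factors indexed by $l'\in I\setminus\{l\}$, obtaining $\prod_{j\in J}(\cdots)=r_l\cdot\prod_{l'\in I\setminus\{l\}}\bigl(\sgn(l'-l)\frac{k_{l'}+k_l}{k_{l'}-k_l}\bigr)^{-1}$. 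Taking the product over $l\in I$ and pairing each unordered $\{l_0,l_1\}\subset I$ with $l_0<l_1$, both orderings contribute the same positive factor $\frac{k_{l_1}+k_{l_0}}{k_{l_1}-k_{l_0}}$ (since $\sgn(l_0-l_1)$ and $k_{l_0}-k_{l_1}$ each flip sign), so the double product of cross terms collapses to $\prod_{i<j\in I}\bigl(\frac{k_j+k_i}{k_j-k_i}\bigr)^{2}$; inverting and combining with $\prod_{l\in I}r_l$ yields the claim.

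Inserting the identity into \eqref{g_1}, \eqref{g_2} replaces the product over $I\times J$ by a standard Hirota phase-shift product over pairs inside $I$, absorbing the residual factors into modified single-soliton amplitudes $b_i\mapsto b_ir_i$ and $a_i\mapsto a_ir_i$. Identifying these with Parker's $\beta_i^{(1)},\beta_i^{(2)}$ after translating his wave and dispersion parameters into $(\k,k_i)$ gives $g_s=\tilde g_s$ up to an overall constant, and hence the equivalence. The main obstacle is not the algebra itself but the dictionary: \cite{P1,P2,P3,P4} adopts a different normalization of phases and of the $x$--$y$ hodograph transform from \cite{LZ,LYS}, so one must carefully reconcile the soliton amplitudes and wave numbers across the two parametrizations before the term-by-term identification becomes transparent. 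Once that dictionary is fixed, the above identity settles the equivalence in one line.
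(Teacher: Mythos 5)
Your proposal follows essentially the same route as the paper: the paper's entire proof of Theorem \ref{equivalence} is the phase shift $\xi_i\rightarrow\xi_i-\varphi_i/2$ with $\varphi_i=\ln r_i$ (your row product $r_i=\prod_{j\neq i}\sgn(j-i)\frac{k_j+k_i}{k_j-k_i}$) together with the rescaling $k_i\rightarrow k_i/2$, and your ``central algebraic identity'' is exactly the computation, left implicit in the paper, that makes this shift convert \eqref{g_1}--\eqref{g_2} into the Hirota form \eqref{g_1g_2}. Your pairing argument proving that identity is correct, so the proposal matches the paper's argument while supplying the one step the paper omits.
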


\begin{proof}
  Let
\begin{equation}\label{transformation}
    \varphi_i=\ln\left( \prod_{j=1, \,j\neq i}^N \sgn (j-i)\f{k_j+k_i}{k_j-k_i}\right),
\end{equation}
and $\sum_{\mu=0,1}$ be the summation over all possible combination of $\mu_1=0,1,\,\ldots ,\mu_N=0,1$. Under the transformation: $\xi_i\rightarrow \xi_i-\f{\varphi_i}{2},\;k_i\rightarrow\f{k_i}{2}$, the $N$-soliton solutions of \eqref{kCH} with arbitrary initial phase becomes
\begin{equation}\label{parameter solution}
  u(y,t;\k) = \left(\ln\f{g_1}{g_2}\right)_t, \qquad
  x(y,t;\k) =\f{y}{\k}+ \ln\f{g_1}{g_2}+\alpha,
\end{equation}
where
 \begin{equation}\label{g_1g_2}
    \begin{aligned}
    g_1 =\sum_{\mu=0,1}\exp\left(\sum_{i=1}^N\mu_i(\xi_i-\phi_i)+\sum_{i<j}^N \mu_i\mu_j\gamma_{ij}\right), \\
 g_2 = \sum_{\mu=0,1}\exp\left(\sum_{i=1}^N\mu_i(\xi_i+\phi_i)+\sum_{i<j}^N \mu_i\mu_j\gamma_{ij}\right),
\end{aligned}
\end{equation}
with
\begin{equation}\label{c_i}
 \xi_i=k_i(y-\k c_it-y_{i0}), \quad  c_i = \frac{2\k^2}{1-\k^2k_i^2},\quad 0<\k k_i<1\quad (1\leq i\leq N)
\end{equation}
\[\phi_i=\ln \f{1+\k k_i}{1-\k k_i}, \quad e^{\gamma_{ij}}=\left(\frac{k_i-k_j}{k_i+k_j}\right)^2,\quad 0<k_1<\cdots<k_N,\]
and $\alpha,\,y_{i0},\,(1\leq i\leq N)$ are arbitrary constants.
\end{proof}

We now review the $N$-peakon of the dispersionless CH equation \eqref{CH eqn}.

Consider the spectral problem associated to \eqref{CH eqn}:
\begin{equation}\label{spectral problem}
\psi_{xx}=(\f{1}{4}+\lambda m)\psi, \qquad e^{\pm\f{x}{2}}\psi(x)\rightarrow 0\quad \text{as}\quad x\rightarrow \mp\infty
\end{equation}
with $m=2\sum_{i=1}^Nm_i\delta_{x_i}$. Denote the eigenvalues of \eqref{spectral problem} by $\tilde{\lambda}_i$\,$(i=1,\ldots, N)$, let $\Delta_n^m$  be the determinant of the $n\times n$ submatrix of a infinite Hankel matrix, whose $(1,1)$ entry is $\tilde{A}_m$, i.e. $\Delta_n^m=\det (\tilde{A}_{m+i+j})_{i,j=0}^{n-1}$, the moments $\tilde{A}_m$ are restricted by
\[\tilde{A}_m=\sum_{i=0}^N(-\tilde{\lambda}_i)^m a_i, \quad a_i=a_i(0)e^{-\f{t}{2\tilde{\lambda}_i}}\,(i\geq 0),\quad \tilde{\lambda}_0=0,\quad a_0=\f{1}{2}.\]
Let $\tilde{\Delta}_n^m=\det (\hat{A}_{m+i+j})_{i,j=0}^{n-1}$ with
\[\hat{A}_m=\sum_{i=1}^N(-\tilde{\lambda}_i)^m a_i, \quad a_i=a_i(0)e^{-\f{t}{2\tilde{\lambda}_i}}.\]

\begin{lemma}[$N$-peakon, \cite{BSS2000}]\label{lem:N-peakon}
The dispersionless CH equation \eqref{CH eqn} admits $N$-peakon solutions
\begin{equation*}
u(x,t)=\sum_{i=1}^Nm_i(t)e^{-|x-x_i(t)|},
\end{equation*}
where
\begin{equation}\label{m_i}
   m_i= \f{2\tilde{\Delta}_{N-i+1}^0\Delta_{N-i}^2}{\Delta_{N-i+1}^1\Delta_{N-i}^1},
   \qquad  x_i = \ln\left(\f{2\tilde{\Delta}_{N-i+1}^0}{\Delta_{N-i}^2}\right).
\end{equation}
\end{lemma}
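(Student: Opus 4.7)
The plan is to follow the inverse scattering route of Beals--Sattinger--Szmigielski. For a pure peakon measure $m=2\sum_{i=1}^N m_i(t)\delta_{x_i(t)}$ the spectral equation \eqref{spectral problem} reduces on each interval between consecutive $x_i$ to the free equation $\psi_{xx}=\psi/4$, with matching conditions $[\psi]=0$ and $[\psi_x]=2\lambda m_i\psi(x_i)$ at each $x_i$. First I would exploit this to express the two Jost solutions as piecewise linear combinations of $e^{\pm x/2}$ and to compute the Weyl/scattering quotient $W(\lambda)$ as an explicit rational function of $\lambda$ whose data are precisely $(m_i,x_i)$; one checks that $W(\lambda)$ has simple poles at $-\tilde{\lambda}_1,\ldots,-\tilde{\lambda}_N$ together with an auxiliary residue at $\lambda=0$.

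Next I would use the Lax pair of \eqref{CH eqn} to establish isospectrality of $\{\tilde{\lambda}_i\}$ and the exponential time evolution $a_i(t)=a_i(0)e^{-t/(2\tilde{\lambda}_i)}$ of the norming constants. This fixes the time dependence of the moments $\tilde{A}_m(t)$ and $\hat{A}_m(t)$, and reduces the problem to the purely algebraic task of recovering $(m_i,x_i)$ from $W(\lambda;t)$ at each fixed $t$.

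The central step is Stieltjes inversion. Writing $W(\lambda)$ as the finite Stieltjes transform $\sum_{i=0}^N a_i/(\lambda+\tilde{\lambda}_i)$ with $\tilde{\lambda}_0=0$ and $a_0=\f{1}{2}$, Stieltjes' theorem on continued fractions yields a unique finite $S$-fraction expansion of $W(\lambda)$ whose successive coefficients are precisely $e^{x_i-x_{i-1}}$ and the $m_i$. The classical Jacobi/Heilermann formulas then express these $S$-fraction coefficients as ratios of Hankel determinants of the moment sequences $\tilde{A}_m$ and $\hat{A}_m$, and telescoping them produces \eqref{m_i}. The appearance of both $\Delta_n^m$ and $\tilde{\Delta}_n^m$ reflects whether the residue at $\lambda=0$ is retained, which in turn encodes the asymmetric behavior at the two ends $x\to\pm\infty$ imposed by the boundary conditions in \eqref{spectral problem}.

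The main obstacle I expect is the combinatorial bookkeeping in the Stieltjes step: matching the peakon ordering $x_1<x_2<\cdots<x_N$ to the indices on $\Delta_{N-i+1}^1$, $\tilde{\Delta}_{N-i+1}^0$ and $\Delta_{N-i}^2$, and verifying that all signs align so that the resulting $m_i$ come out positive. A related technical point is the nonvanishing (in fact positivity) of the relevant Hankel determinants, which I would establish by induction on $N$ using the total positivity of the Cauchy-type kernel $1/(\lambda+\tilde{\lambda}_i)$; this is what makes the Stieltjes inversion well defined.
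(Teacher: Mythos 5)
The paper gives no proof of this lemma; it is imported verbatim from Beals--Sattinger--Szmigielski \cite{BSS2000}, and your outline is a faithful reconstruction of exactly that argument: reduction of the discrete spectral problem to piecewise-free solutions with jump conditions, isospectral flow with exponential evolution of the norming constants, and Stieltjes continued-fraction inversion expressed through Hankel determinants of the two moment sequences (with and without the $\lambda=0$ contribution). The only detail needing care is the one you already flag as bookkeeping: the dictionary between the $S$-fraction coefficients and $(m_i,x_i)$ is cleanest after the Liouville change of variable $y=\tanh(x/2)$ to the discrete string, where the successive gap coefficients are $y_i-y_{i-1}$ rather than $e^{x_i-x_{i-1}}$ and the masses are rescaled by $1-y_i^2$ -- this does not affect the validity of your plan.
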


Last, we present some determinant identities necessary to the proofs of the convergence of smooth $N$-soliton.

Let
 \begin{equation}\label{A_m}
    A_m=\sum_{i=1}^N\lambda_i^mE_i,\quad E_i=e^{\f{2}{\lambda_i}t+x_{i0}},\quad x_{i0}=\f{y_{i0}}{\k},\quad\lambda_i= \f{2}{c_i},
\end{equation}
 where $m,n\in \mathbb{Z}$ and $0\leq n\leq N$. Let $D_n^m=\det (A_{m+i+j})_{i,j=0}^{n-1}$, and adopt the convention that $D_0^m=1$.

\begin{lemma}
For $ n=1,\ldots, N$,
  \begin{equation}\label{Hankle determinant}
    D_n^m=\sum_{1\leq i_1<\cdots <i_n\leq N}\Delta_n(i_1,\cdots , i_n)(\lambda_{i_1}\cdots \lambda_{i_n})^mE_{i_1}\cdots E_{i_n},
  \end{equation}
where
  \begin{equation*}
    \Delta_n(i_1,\cdots , i_n)=\prod_{1\leq l<m\leq n}(\lambda_{i_l}-\lambda_{i_m})^2.
  \end{equation*}
\end{lemma}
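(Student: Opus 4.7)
The plan is to recognize the Hankel matrix $(A_{m+i+j})_{i,j=0}^{n-1}$ as a product of two rectangular matrices and then apply the Cauchy--Binet formula; each minor that appears will be a Vandermonde determinant, and the stated identity will drop out.

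First, I would split the exponent $m+i+j$ and write
\[
 A_{m+i+j} = \sum_{k=1}^N \lambda_k^{\,i}\cdot\bigl(E_k\,\lambda_k^{\,m+j}\bigr),
\]
which exhibits the factorization $\bigl(A_{m+i+j}\bigr)_{i,j=0}^{n-1} = B\,C$, where $B$ is the $n\times N$ matrix with entries $B_{ik}=\lambda_k^{\,i}$ and $C$ is the $N\times n$ matrix with entries $C_{kj}=E_k\,\lambda_k^{\,m+j}$.

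Next, I would apply the Cauchy--Binet formula:
\[
 D_n^m = \det(BC) = \sum_{\substack{S\subseteq\{1,\ldots,N\}\\ |S|=n}} \det(B_S)\,\det(C^S),
\]
where $S=\{i_1<\cdots<i_n\}$ indexes the columns of $B$ and the rows of $C$. The factor $\det(B_S) = \det\bigl(\lambda_{i_q}^{\,p}\bigr)_{p=0,\,q=1}^{n-1,\,n}$ is a Vandermonde determinant equal to $\prod_{1\leq p<q\leq n}(\lambda_{i_q}-\lambda_{i_p})$. For $\det(C^S)$, I would pull the factor $E_{i_p}\lambda_{i_p}^{\,m}$ out of the $p$-th row, leaving another Vandermonde determinant, so that
\[
 \det(C^S) = \Bigl(\prod_{p=1}^n E_{i_p}\lambda_{i_p}^{\,m}\Bigr)\prod_{1\leq p<q\leq n}(\lambda_{i_q}-\lambda_{i_p}).
\]

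Multiplying these two contributions yields the squared Vandermonde factor $\Delta_n(i_1,\ldots,i_n)=\prod_{p<q}(\lambda_{i_q}-\lambda_{i_p})^2$, together with $(\lambda_{i_1}\cdots\lambda_{i_n})^m E_{i_1}\cdots E_{i_n}$, which is exactly \eqref{Hankle determinant}. I do not expect a serious obstacle here; the only point requiring care is to keep the row/column indexing consistent so that both minors are genuinely Vandermonde (in particular, to respect the ordering $i_1<\cdots<i_n$ so that no sign discrepancy appears between the two factors before squaring).
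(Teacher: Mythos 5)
Your argument is correct and complete: the factorization $(A_{m+i+j})_{i,j=0}^{n-1}=BC$ with $B_{ik}=\lambda_k^{\,i}$ and $C_{kj}=E_k\lambda_k^{\,m+j}$ does reproduce the Hankel matrix, Cauchy--Binet applies since $n\leq N$, and both minors reduce to the same Vandermonde determinant $\prod_{p<q}(\lambda_{i_q}-\lambda_{i_p})$ (with the ordering $i_1<\cdots<i_n$ fixing any sign issue, which in any case disappears upon squaring). The paper itself offers no proof of this lemma --- it is stated bare, with the citation to Beals--Sattinger--Szmigielski covering only the positivity and vanishing consequences --- so there is nothing to compare against; your Cauchy--Binet derivation is the standard (Heine-type) proof of exactly this identity and fills the gap cleanly. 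It also yields the conventions $D_n^m>0$ for $0\leq n\leq N$ (each summand is positive when the $\lambda_i$ are distinct and the $E_i$ positive) and $D_n^m=0$ for $n>N$ (the rank of $BC$ is at most $N$) as free by-products.
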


 We adopt the convention that $D_0^m=1$ and $D_n^m=0\,(n<0)$. Since $\lambda_i's$ are distinct, we have $D_n^m>0\,(0\leq n\leq N)$, and  $D_n^m=0\,(n>N)$ by \cite[Theorem 6.1]{BSS2000}.

\begin{lemma}[Jacobi identity, \cite{VD}]\label{Jacobi identity}
For any determinant D, let $D(i,j;k,l)$ be  the determinant obtained from D by deleting the rows $i,j$ and the columns $k,l$, respectively, then
 \begin{equation*}
         DD(i,j;k,l) =D(i;k) D(j;l) -  D(j;k)D(i;l),  \qquad i<j,\quad k<l,
 \end{equation*}
\end{lemma}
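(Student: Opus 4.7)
The plan is to reduce the identity to a classical statement about $2\times 2$ minors of the classical adjoint. Let $A$ be the underlying matrix, $D=\det A$, and $B=\operatorname{adj}(A)$. Then $AB=BA=D\,I_n$, and the $(p,q)$-entry of $B$ is the signed minor $(-1)^{p+q}D(q;p)$, so the $2\times 2$ submatrix $B[\{k,l\}\mid\{i,j\}]$ has the explicit form
\[
B[\{k,l\}\mid\{i,j\}]=
\begin{pmatrix}
(-1)^{i+k}D(i;k) & (-1)^{j+k}D(j;k)\\[2pt]
(-1)^{i+l}D(i;l) & (-1)^{j+l}D(j;l)
\end{pmatrix},
\]
whose determinant equals $(-1)^{i+j+k+l}\bigl[D(i;k)D(j;l)-D(j;k)D(i;l)\bigr]$.

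Next I would invoke the Jacobi--Sylvester formula for minors of the adjoint: for any row subset $P$ and column subset $Q$ of $\{1,\dots,n\}$ of common size $r$,
\[
\det B[P\mid Q]=(-1)^{\sigma(P)+\sigma(Q)}\,D^{\,r-1}\,\det A[Q^c\mid P^c],\qquad \sigma(S)=\sum_{s\in S}s.
\]
Applying this with $r=2$, $P=\{k,l\}$, $Q=\{i,j\}$ gives the right-hand side $(-1)^{i+j+k+l}D\cdot D(i,j;k,l)$. Comparing with the explicit expansion above and cancelling the common sign $(-1)^{i+j+k+l}$ yields the claim $D\cdot D(i,j;k,l)=D(i;k)D(j;l)-D(j;k)D(i;l)$. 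For singular $A$ the identity persists by polynomial continuity in the entries of $A$.

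The main obstacle is the minor-of-adjoint step, which is where the actual combinatorial content sits. A clean justification proceeds by embedding $D\cdot D(i,j;k,l)$ as the determinant of a suitably block-partitioned $n\times n$ matrix built from $A$ and applying Laplace expansion along the two distinguished rows and columns, or equivalently by using the Cauchy--Binet formula to compare size-$2$ minors on both sides of $AB=D\,I_n$; both routes are routine but bookkeeping-heavy. For the present $r=2$ case one can, as a fallback, simply verify the identity by direct polynomial expansion in the entries of $A$, since both sides are multilinear and alternating in the relevant rows and columns, which uniquely pins them down up to a scalar that is easily fixed by testing on a diagonal matrix.
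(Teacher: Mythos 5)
The paper does not prove this lemma at all: it is quoted verbatim from the cited reference of Vein and Dale, so there is no internal proof to compare against. Your argument is correct and is essentially the standard textbook derivation (and, as it happens, the one used in that reference): identify $D(i;k)D(j;l)-D(j;k)D(i;l)$, up to the sign $(-1)^{i+j+k+l}$, with a $2\times 2$ minor of $\operatorname{adj}(A)$, and then apply Jacobi's theorem on minors of the adjugate to convert that minor into $D\cdot D(i,j;k,l)$; your sign bookkeeping in both steps checks out, and the continuity argument for singular $A$ is fine. The only caveat is the one you flag yourself: the minors-of-the-adjugate formula carries all the content, so as written the proof is a reduction to an equally nontrivial classical fact rather than a self-contained argument. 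Your proposed fallback for $r=2$ (both sides are multilinear and alternating in rows $i,j$ and columns $k,l$ of $A$, hence proportional, with the constant fixed on a diagonal matrix) is slightly too quick as stated --- the left side also depends on $D$ itself, which involves those rows and columns, so the ``alternating multilinear, hence proportional'' step needs to be applied after fixing the complementary submatrix --- but a direct Laplace expansion along rows $i,j$, or the Cauchy--Binet comparison of size-$2$ minors in $A\,\operatorname{adj}(A)=D\,I_n$ that you mention, closes this cleanly. For the purposes of this paper, which only invokes the identity for specific Hankel-type determinants, citing the result as the authors do or supplying the adjugate argument are equally acceptable.
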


\begin{lemma}\label{critical identities}
For any $m,n\in\mathbb{Z}$ and $n\geq0$,
  \begin{equation}\label{Jacobi formula}
  D_{n+2}^mD_n^{m+2}=D_{n+1}^{m+2}D_{n+1}^m-(D_{n+1}^{m+1})^2,
  \end{equation}
\begin{subequations}
   \begin{equation}\label{Jacobi formula 1}
    D_{n+1,t}^2D_n^2-D_{n,t}^2D_{n+1}^2=2D_{n+1}^1D_n^3,
  \end{equation}
  \begin{equation}\label{Jacobi formula 2}
    D_{n+2,t}^0D_{n+1}^0-D_{n+1,t}^0D_{n+2}^0=2D_{n+2}^{-1}D_{n+1}^1,
  \end{equation}
\end{subequations}
  \begin{equation}\label{Jacobi formula 3}
    D_{n,t}^2D_{n+2}^0-D_{n+1,t}^2D_{n+1}^0=-2D_{n+2}^{-1}(1;2)D_{n+1}^1,
  \end{equation}
  \begin{equation}\label{Jacobi formula 4}
    D_{n+1,t}^0D_{n+1}^2-D_{n+2,t}^0D_n^2=2D_{n+2}^{-1}(1;2)D_{n+1}^1.
  \end{equation}
\end{lemma}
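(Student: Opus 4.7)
My plan begins with the observation that the explicit form $E_i=\exp(2t/\lambda_i+x_{i0})$ given in \eqref{A_m} yields $E_{i,t}=(2/\lambda_i)E_i$, and hence $A_{m,t}=2A_{m-1}$. Differentiating the Hankel determinant $D_n^m=\det(A_{m+i+j})_{i,j=0}^{n-1}$ row by row, the derivative of any row with index $i\ge 1$ equals $2$ times row $i-1$ (by the Hankel structure), producing a duplicate row and a vanishing determinant; only the first-row derivative survives, so
\[
D_{n,t}^m \;=\; 2\det \bar M_n^m,
\]
where $\bar M_n^m$ is the matrix obtained from $D_n^m$ by replacing its first row with the shifted sequence $(A_{m-1},A_m,\ldots,A_{m+n-2})$.

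Identity \eqref{Jacobi formula} then follows immediately from applying Lemma \ref{Jacobi identity} to the Hankel $D_{n+2}^m$, deleting rows and columns $\{1,n+2\}$: the four $(n+1)\times(n+1)$ cofactors are $D_{n+1}^m$, $D_{n+1}^{m+2}$, and (by Hankel symmetry) $D_{n+1}^{m+1}$ twice, while the interior $n\times n$ minor is $D_n^{m+2}$. For the bilinear identities \eqref{Jacobi formula 1}--\eqref{Jacobi formula 4}, I would combine Jacobi applied to the shifted-row matrices $\bar M_{n+2}^m$ for various $m$ with the auxiliary determinantal equality
\[
\det\bar M_{n+1}^m \;=\; \det D_{n+2}^{m-2}(1;2),
\]
which holds because, after transposition and re-indexing, both sides arise as the same $(n+1)\times(n+1)$ minor of an extended Hankel matrix (namely, the minor obtained by deleting the last row and the second column of $D_{n+2}^{m-1}$). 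For example, Jacobi on $\bar M_{n+2}^0$ with rows and columns $\{1,n+2\}$ yields the relation $D_{n+2,t}^0 D_n^2 = D_{n+1}^2 D_{n+1,t}^0 - D_{n+1,t}^1 D_{n+1}^1$; combining with the $m=1$ case of the auxiliary identity (which gives $D_{n+1,t}^1/2 = \det D_{n+2}^{-1}(1;2)$) produces \eqref{Jacobi formula 4}. Identity \eqref{Jacobi formula 3} then follows by differentiating \eqref{Jacobi formula} at $m=0$ in $t$ and subtracting \eqref{Jacobi formula 4}, since both manipulations produce the common term $D_{n+1,t}^1 D_{n+1}^1$ on one side. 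Identities \eqref{Jacobi formula 1} and \eqref{Jacobi formula 2} are obtained by analogous Jacobi applications to $\bar M_{n+2}^1$ and $\bar M_{n+3}^0$ (or closely related variants), again combined with the auxiliary equality.

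The main obstacle I anticipate is the bookkeeping involved in the last step: for each of the four bilinear identities, the correct choice of modified matrix, the correct pair of rows and columns to delete in the Jacobi identity, and the correct instance of the auxiliary determinantal equality must all be identified precisely so that the six terms produced by the Jacobi identity align with the stated form. A secondary issue is verifying the auxiliary equality itself, which is most transparently done by writing both $\bar M_{n+1}^m$ and $D_{n+2}^{m-2}(1;2)$ as the transpose of the same submatrix of the $(n+1)\times(n+2)$ rectangular Hankel whose $(r,c)$-entry is $A_{m+r+c-3}$.
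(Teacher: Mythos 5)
Your handling of \eqref{Jacobi formula}, \eqref{Jacobi formula 4} and \eqref{Jacobi formula 3} is sound. The observation $A_{m,t}=2A_{m-1}$, the identification $D_{n,t}^m=2\det\bar M_n^m$ together with $\det\bar M_{n+1}^m=D_{n+2}^{m-2}(1;2)$, and the Jacobi identity applied to $\bar M_{n+2}^0$ at the corner rows and columns do give \eqref{Jacobi formula 4}; your $\bar M_{n+2}^0$ is precisely the transpose of the paper's $D_4$. Your derivation of \eqref{Jacobi formula 3} by differentiating \eqref{Jacobi formula} at $m=0$ in $t$ and subtracting \eqref{Jacobi formula 4} is a valid alternative to the paper's route (which instead applies Jacobi directly to $D_{n+2}^0$ with the asymmetric deletion $i=1,\,k=2,\,j=l=n+2$), and is arguably cleaner.

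The gap is in \eqref{Jacobi formula 1} and \eqref{Jacobi formula 2}. Jacobi applied at the corners to $\bar M_{n+2}^1$ yields $D_{n+2,t}^1D_n^3=D_{n+1}^3D_{n+1,t}^1-D_{n+1,t}^2D_{n+1}^2$, and applied to $\bar M_{n+3}^0$ it yields \eqref{Jacobi formula 4} with $n$ replaced by $n+1$; neither is the identity you need, and no ``closely related variant'' of a single $\bar M$ can work, for a structural reason: every product in \eqref{Jacobi formula 1} pairs a determinant of order $n+1$ with one of order $n$ (and in \eqref{Jacobi formula 2} an order $n+2$ with an order $n+1$), whereas the Jacobi identity on an order-$(n+2)$ matrix only produces products of orders $(n+2)\times n$ and $(n+1)\times(n+1)$. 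The paper overcomes this mismatch with a device your proposal lacks: the bordered determinant $D_1$ of order $n+2$ whose first column is the unit vector with $1$ in the second slot and whose remaining columns are the Hankel columns $(A_{r+c})$. Expanding along that special column, $D_1=-\tfrac12 D_{n+1,t}^2$, $D_1(n+2;n+2)=-\tfrac12 D_{n,t}^2$ and $D_1(1;n+2)=D_n^3$, so each factor drops one order ``in content'' and Jacobi at the corners collapses exactly to \eqref{Jacobi formula 1}. Identity \eqref{Jacobi formula 2} is then obtained in the paper not by a fresh Jacobi application but by the reflection symmetry of the Hankel determinants (the characteristics of $D_n^m$ under $\lambda_i\mapsto\lambda_i^{-1}$), another step absent from your outline. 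To close the argument you must add the bordering construction (or an equivalent one) for \eqref{Jacobi formula 1}, and supply a derivation of \eqref{Jacobi formula 2}.
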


\begin{corollary}\label{identities for main theorem}
\begin{equation}\label{formula}
    \sum_{i=0}^n\f{(D_i^2)^2}{D_{i+1}^1D_i^1}=\f{D_n^3}{D_{n+1}^1},
  \qquad   \sum_{i=n+1}^{N-1}\f{(D_{i+1}^0)^2}{D_{i+1}^1D_i^1}=\f{D_{n+2}^{-1}}{D_{n+1}^1}.
\end{equation}
\end{corollary}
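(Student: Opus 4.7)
The plan is to recognize each identity as a telescoping sum whose term-by-term equality is precisely the Jacobi-type relation \eqref{Jacobi formula} from Lemma \ref{critical identities}. Once the telescoping is set up, only the two boundary terms survive, and each of them vanishes by the conventions on $D_n^m$ (namely $D_n^m=0$ for $n<0$, and $D_n^m=0$ for $n>N$ by \cite[Theorem 6.1]{BSS2000}).

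For the first identity I would try to show
\[
\frac{(D_i^2)^2}{D_{i+1}^1 D_i^1} \;=\; \frac{D_i^3}{D_{i+1}^1} - \frac{D_{i-1}^3}{D_i^1}.
\]
After clearing denominators (all of which are positive for $0\leq i\leq n$ since $D_n^m>0$ on $0\leq n\leq N$), this is equivalent to
\[
(D_i^2)^2 \;=\; D_i^3\,D_i^1 - D_{i-1}^3\,D_{i+1}^1,
\]
which is exactly \eqref{Jacobi formula} with the substitution $n\mapsto i-1$, $m\mapsto 1$. Summing over $i=0,1,\dots,n$ collapses the right-hand side to $\tfrac{D_n^3}{D_{n+1}^1}-\tfrac{D_{-1}^3}{D_0^1}$, and the lower boundary term drops because $D_{-1}^3=0$ (while $D_0^1=1$).

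For the second identity the analogous step is
\[
\frac{(D_{i+1}^0)^2}{D_{i+1}^1 D_i^1} \;=\; \frac{D_{i+1}^{-1}}{D_i^1} - \frac{D_{i+2}^{-1}}{D_{i+1}^1},
\]
which after clearing denominators becomes $(D_{i+1}^0)^2=D_{i+1}^{-1}D_{i+1}^1-D_{i+2}^{-1}D_i^1$, i.e.\ \eqref{Jacobi formula} with $n\mapsto i$ and $m\mapsto -1$. Summing over $i=n+1,\dots,N-1$ telescopes to $\tfrac{D_{n+2}^{-1}}{D_{n+1}^1}-\tfrac{D_{N+1}^{-1}}{D_N^1}$, and the upper boundary term drops because $D_{N+1}^{-1}=0$ (Hankel determinants of order greater than $N$ vanish).

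There is no substantive obstacle here; the argument is essentially bookkeeping. The only things to be careful about are (i) that \eqref{Jacobi formula} is allowed to be used with $m=-1$ and with the shifted index $n=i-1$, both of which are covered by its statement for all $m\in\mathbb{Z}$ and $n\geq 0$, and (ii) that all denominators $D_{i+1}^1,D_i^1$ appearing in the sums are nonzero, which is guaranteed by the positivity $D_n^m>0$ for $0\leq n\leq N$ recorded before Lemma \ref{Jacobi identity}.
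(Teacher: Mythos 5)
Your proposal is correct and is essentially identical to the paper's own proof: both rewrite each summand via \eqref{Jacobi formula} (with the same index substitutions), telescope, and kill the boundary terms using the conventions $D_n^m=0$ for $n<0$ or $n>N$. The only cosmetic difference is that you spell out the telescoping and the boundary-term bookkeeping explicitly, which the paper leaves implicit.
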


\begin{proof}[\bf{Proof of Lemma \ref{critical identities}}]
The identity  \eqref{Jacobi formula} follows from Lemma \ref{Jacobi identity} with $D=D_n^m$ and $i=k=1,j=l=n+2$.
By the characteristics of  the determinant $D_{n}^m$, \eqref{Jacobi formula 2} is equivalent to \eqref{Jacobi formula 1}.
 Note that $A_m$ is given by \eqref{A_m}, we have $A_{m,t}=2A_{m-1}$, therefore,
  \[m 2D_{n+1}^{m-2}(1;2)=D_{n,t}^m= 2D_{n+1}^{m-1}(n+1;2).\]
Using the relation above, \eqref{Jacobi formula 1} and \eqref{Jacobi formula 3}-\eqref{Jacobi formula 4} can be rewritten as follows,
\begin{align*}
  D_{n+2}^1(n+2;2)D_n^2
  -D_{n+1}^2D_{n+1}^1(n+1;2)&=D_{n+1}^1D_n^3,\\
 D_{n+1}^0(1;2)D_{n+2}^0-D_{n+2}^0(1;2)D_{n+1}^0&=-D_{n+2}^{-1}(1;2)D_{n+1}^1,\\
D_{n+2}^{-1}(n+2;2) D_{n+1}^2-D_{n+3}^{-1}(n+3;2)D_{n}^2&=D_{n+2}^{-1}(1;2)D_{n+1}^1.
\end{align*}
Consider the following determinants of order $n+2$,
\begin{eqnarray*}
   D_1=
    \begin{vmatrix}
     0& A_1 & A_2& \cdots & A_{n+1}\\
     1&  A_2 & A_3& \cdots & A_{n+2}\\
     0&  A_3 & A_4& \cdots & A_{n+3}\\
     \vdots & \vdots & \vdots & \ddots &  \vdots \\
     0& A_{n+2} & A_{n+3}& \cdots & A_{2n+2}\\
    \end{vmatrix},  \quad
    D_3=D_{n+2}^0,\quad
     D_4=
    \begin{vmatrix}
      A_{-1} & A_1& \cdots & A_{n+1}\\
      A_0 & A_2& \cdots & A_{n+2}\\
      \vdots & \vdots & \ddots & \vdots \\
      A_n & A_{n+2}& \cdots & A_{2n+2}\\
    \end{vmatrix}.
\end{eqnarray*}
The identities \eqref{Jacobi formula 1} and \eqref{Jacobi formula 3}-\eqref{Jacobi formula 4} follow readily from Lemma \ref{Jacobi identity}:
\begin{itemize}
  \item
For $D_1$ and $D_4$, set $i=k=1,j=l=n+2$, we have \eqref{Jacobi formula 1} and \eqref{Jacobi formula 4},
  \item
For $D_3$, set $i=1,k=2,j=l=n+2$, we have \eqref{Jacobi formula 3}.
\end{itemize}
\end{proof}

By \eqref{Jacobi formula}, we have
 \[(D_i^2)^2=D_i^1D_i^3-D_{i+1}^1D_{i-1}^3,
  \qquad (D_{i+1}^0)^2=D_{i+1}^{-1}D_{i+1}^1-D_{i+2}^{-1}D_i^1.\]
Dividing the equations above by $D_{i+1}^1D_i^1$ and summing up, we obtain Corollary \ref{identities for main theorem} by taking account of the fact that
$D_n^m=0$ for any $n>N$ or $n<0$.

\section{Results on convergence}

 Motivated by Li and Olver \cite{LiO1, LiO2}, we consider a sequence of $N$-soliton solutions with the velocity of the $i$-th soliton in $(x,t)$-space independent of the dispersion parameter $\k$.
\begin{remark}\label{ss}
There exists a sequence of analytic $N$-soliton solutions, the velocity of the $i$-th soliton in $(x,t)$-space given by positive constant independent of $\k$. In fact,
since $k_i\,'s$ are arbitrary and satisfy $0<\k k_i<1\,(1\leq i\leq N)$, $0<k_1<\cdots<k_N$, let $\alpha_i\,(1\leq i\leq N)$ be N distinct positive constants independent of $\k$, and
\begin{equation}\label{k_i}
  k_i=\f{1}{\k}\left(1-\f{2\k^2}{\alpha_i}\right)^{\f{1}{2}},
\end{equation}
then $\tilde{c}_i$ given by \eqref{c_i} with $k_i$ chosen by \eqref{k_i} equals to $\alpha_i$. Thus the sequence of $N$-soliton solutions given by \eqref{parameter solution}-\eqref{c_i} and \eqref{k_i} is required.
\end{remark}
In the following, we will still use $c_i$ to denote the constant $\tilde{c}_i$, and give the results on the convergence of this sequence of smooth $N$-soliton solutions.
With the preparations above, we have the convergence of the $N$-soliton solutions of the dispersion CH equation.

\begin{theorem}\label{parameter limit}
Let
\begin{equation}\label{x_i}
      \bar{x}_n = \ln\left(\f{2D_{N-n+1}^0}{D_{N-n}^2}\right), \quad 1\leq n\leq N.
\end{equation}
Under appropriate  translation and  scaling transformations, for any given $t$, when $\k\rightarrow 0$, the $N$-soliton given by \eqref{parameter solution}-\eqref{c_i} and \eqref{k_i} has the limit
\begin{equation}\label{n-peakon}
\bar{u}(x;t)= \begin{cases}
 u_n,\quad  \bar{x}_{n-1}<x\leq \bar{x}_n\,(n=1,2,\ldots,N)\\
  u_{N+1}, \quad x>\bar{x}_N
\end{cases}
\end{equation}
with
\begin{equation}\label{limit}
u_n=\f{e^xD_{N-n}^3+4e^{-x}D_{N-n+2}^{-1}}{D_{N-n+1}^1}, \qquad
u_{N+1}=2\sum_{i=1}^Nc_iz_i^{-1}.
\end{equation}
\end{theorem}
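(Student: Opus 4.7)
The plan is to carry out an asymptotic analysis of the Parker-form expression \eqref{parameter solution}--\eqref{g_1g_2} under the scaling \eqref{k_i}, which forces $\kappa k_i\to 1$ and hence $e^{\phi_i}\sim 2c_i/\kappa^2$, $e^{\gamma_{ij}}\sim\kappa^4(c_i-c_j)^2/(4c_i^2c_j^2)$. Absorbing the initial phases via $y_{i0}=\kappa x_{i0}$ and introducing $Y=y/\kappa$, one has $\xi_i\sim Y-c_it-x_{i0}$, so each $g_j$ is a polynomial of degree $N$ in $e^Y$ whose coefficients carry explicit powers of $\kappa$. A direct count shows that the contribution of a subset $I$ of size $n$ scales as $\kappa^{-2n(N-n-1)}e^{nY}$ in $g_1$ and as $\kappa^{-2n(N-n+1)}e^{nY}$ in $g_2$. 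These exponents being quadratic in $n$, the dominating subset size in each $g_j$ flips as $Y$ crosses a finite list of critical values $\bar Y_n=(2N-4n)\ln\kappa$ for $n=0,1,\ldots,N$; on the stable interval $\bar Y_n<Y<\bar Y_{n+1}$ the size-$n$ term of $g_1$ and the size-$(n+1)$ term of $g_2$ dominate simultaneously, while each $\bar Y_n$ is a common transition layer for both. Choosing the translation $\alpha=-2N\ln\kappa+\alpha'$ absorbs the singular logarithms so that $x=Y+\ln(g_1/g_2)+\alpha$ remains $O(1)$ in every transition layer.

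Zooming in on the $n$-th layer by writing $Y=\bar Y_n+\eta$ with $\eta=O(1)$ reduces $g_1$ (sizes $n-1$ vs $n$) and $g_2$ (sizes $n$ vs $n+1$) to explicit two-term expressions in $e^\eta$. The algebraic heart is to identify the four emerging subset sums with the Hankel determinants of \eqref{Hankle determinant}: the mixed interaction product $\prod_{j\in J,\,l\in I}c_jc_l/|c_j-c_l|$ is factored through the Vandermonde identity $\prod_{j\in J,\,l\in I}|c_j-c_l|=\Delta(c)/[\Delta(c_I)\Delta(c_J)]$ and translated to the $\lambda$-variables via $\lambda_i=2/c_i$, after which the sum over subsets takes exactly the form $\sum_{|I|=m}\Delta_m(I)\prod_{i\in I}\lambda_i^sE_i=D_m^s$ with appropriate exponents $s\in\{-1,0,1,2,3\}$. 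Substituting these identifications into $x=Y+\ln(g_1/g_2)+\alpha$ and $u=\partial_t\ln(g_1/g_2)$ and letting $\kappa\to 0$, the transition layer in $Y$ opens into a full $x$-interval $(\bar x_{n-1},\bar x_n]$ on which the limit of $u$ becomes the bilinear-in-$e^{\pm x}$ function $u_n=(e^xD_{N-n}^3+4e^{-x}D_{N-n+2}^{-1})/D_{N-n+1}^1$, with transition position $\bar x_n=\ln(2D_{N-n+1}^0/D_{N-n}^2)$. The Jacobi identity \eqref{Jacobi formula} together with the telescoping identities of Corollary \ref{identities for main theorem} are then used to collapse the raw ratios of subset sums into this compact form, and the rightmost regime $x>\bar x_N$, where the leading size-$N$ contribution cancels from $g_1/g_2$, is handled by retaining the next-to-leading size-$(N-1)$ correction to obtain $u_{N+1}=2\sum c_iz_i^{-1}$.

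The principal technical obstacle will be the identification step in the previous paragraph: the subset sums delivered by the soliton formula involve a specific intertwining of $c_i$-powers, cross-Vandermonde products, and $E_i^{\pm1}$ factors, and matching them to the Hankel determinants $D_m^s$ requires a delicate rearrangement combining the Vandermonde factorization with the substitution $c_i\leftrightarrow\lambda_i$. A secondary difficulty is establishing that the convergence is uniform in $x$: the stable $Y$-intervals collapse to single points in $x$-space while the transition layers open up, so the inner (transition-layer) and outer (stable-interval) expansions must patch continuously across each $\bar x_n$ before the $\kappa\to 0$ limit is taken.
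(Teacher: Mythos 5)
Your plan is sound and shares the algebraic core of the paper's argument --- the asymptotics $e^{-\phi_i}\sim\lambda_i\epsilon$, $e^{\gamma_{ij}}\sim\epsilon^2(\lambda_i-\lambda_j)^2$, the Vandermonde factorization $Q=\Delta_n\Delta_N/\Delta_{N-n}$ that converts the subset sums into the Hankel determinants of \eqref{Hankle determinant}, and the derivative identities \eqref{Jacobi formula 1}--\eqref{Jacobi formula 4} at the end --- but it packages the analysis differently. You run classical matched asymptotics in $Y=y/\kappa$: stable intervals where one subset size dominates, transition layers around critical values spaced by $\ln(1/\kappa)$, and a final push through the map $x=Y+\ln(g_1/g_2)+\alpha$. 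The paper instead eliminates $y$ from the outset: writing $e^{\xi_i}=z_i/g+o(1)$ with $z_i=e^xE_i^{-1}$, it converts Lemma \ref{g_1,g_2} into a single implicit polynomial equation \eqref{g-eqn} for $g$ with coefficients $\epsilon^{n(n-1)}h_n(x)$, reads off the interval $(\bar x_{n-1},\bar x_n]$ from the sign pattern of the $h_n$ (Lemmas \ref{x_i sign} and \ref{x_i order}), extracts the dominant balance $g\sim-(h_n/h_{n-1})\epsilon^{2n-2}$ (Proposition \ref{prop g leading term}), and evaluates $u=g_t/(g-g_x)$ by implicit differentiation. The payoff of the paper's route is that both difficulties you flag largely dissolve: the $x$-intervals and the uniformity come directly from the sign conditions on $h_n$ with no inner/outer patching, and the Hankel identification is done once and for all in Lemma \ref{g_1,g_2}. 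Your route buys a more transparent picture (peaks as collapsed stable regions, arcs as opened transition layers) at the cost of the matching argument, which you would genuinely have to carry out. Two bookkeeping points to repair if you pursue it: your layer/interval indexing is off by one (for $N=1$, the layer where $g_1$ passes from size $0$ to size $1$ produces the right arc $x>\bar x_1$, not $(\bar x_0,\bar x_1]$), and the final collapse to $u_n$ rests on the time-derivative identities \eqref{Jacobi formula 1}--\eqref{Jacobi formula 4} of Lemma \ref{critical identities} rather than on Corollary \ref{identities for main theorem}, which the paper reserves for matching $u_n$ to the peakon profile in Theorem \ref{main result}.
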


\begin{remark}\label{x_n:explaination}
If we set $\lambda_i=-4\tilde{\lambda}_i$, i.e. $c_i=-\f{1}{2\tilde{\lambda}_i}$, which is the asymptotic velocity of the position $x_i$ at large positive time for the peakon solutions given by Lemma \ref{N-peakon}, and $c_i's$ are distinct positive constnts by \cite[Theorems 4.1 and 6.4]{BSS2000}. Then $A_m=4^m\hat{A}_m$ and $D_n^m=4^{n(m+n-1)}\tilde{\Delta}_n^m$ for any $m,n\in \mathbb{Z}$ and $0\leq n\leq N$, hence,
\[D_{N-i+1}^0=4^{(N-i+1)(N-i)}\tilde{\Delta}_{N-i+1}^0,\quad \quad D_{N-i}^2=4^{(N-i)(N-i+1)}\tilde{\Delta}_{N-i}^2.\]
Since $\Delta_n^m=\tilde{\Delta}_n^m( m\geq 1)$,  we obtained that the second expression in \eqref{m_i} is fixed under the transformation: \[\Delta_n^2\rightarrow D_n^2,\quad \tilde{\Delta}_n^0\rightarrow D_n^0.\]
Therefore, when $\lambda_i=-4\tilde{\lambda}_i$, we have $\bar{x}_n=x_n\,(1\leq n\leq N)$.
\end{remark}

\begin{theorem}\label{main result}
 Under appropriate  translation and  scaling transformations, the $N$-soliton solutions of \eqref{kCH} given by \eqref{parameter solution}-\eqref{c_i} and \eqref{k_i} with $\alpha_i=-\f{1}{2\tilde{\lambda}_i}$ converges to the $N$-peakon of \eqref{CH eqn} uniformly with respect to $x$ as $\k\rightarrow 0$.
\end{theorem}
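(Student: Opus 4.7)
The plan is to combine Theorem \ref{parameter limit} with the identities of Corollary \ref{identities for main theorem}: Theorem \ref{parameter limit} already produces a piecewise-in-$x$ limit $\bar{u}(x,t)$, so the task reduces to (a) recognizing $\bar{u}$ as the $N$-peakon of Lemma \ref{lem:N-peakon} under the parameter choice $\alpha_i=-\f{1}{2\tilde{\lambda}_i}$, and (b) upgrading spatially pointwise convergence to uniform convergence in $x$. First I would note that the prescription $\alpha_i=-\f{1}{2\tilde{\lambda}_i}$ is exactly the one flagged in Remark \ref{x_n:explaination}: it gives $\lambda_i=-4\tilde{\lambda}_i$, the scaling $D_n^m=4^{n(m+n-1)}\tilde{\Delta}_n^m$, and the identification $\bar{x}_n=x_n$ of the partition points of $\bar u$ with the peakon positions.

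On any interval $(x_{n-1},x_n]$ the $N$-peakon equals
\begin{equation*}
e^{-x}\sum_{j=1}^{n-1}m_je^{x_j}+e^{x}\sum_{j=n}^{N}m_je^{-x_j}=e^{-x}\sum_{j=1}^{n-1}\f{4(\tilde{\Delta}_{N-j+1}^0)^2}{\Delta_{N-j+1}^1\Delta_{N-j}^1}+e^{x}\sum_{j=n}^{N}\f{(\Delta_{N-j}^2)^2}{\Delta_{N-j+1}^1\Delta_{N-j}^1},
\end{equation*}
by Lemma \ref{lem:N-peakon}. Reindexing $i=N-j$, converting $\tilde{\Delta}$ back to $D$ via the scaling above, and applying Corollary \ref{identities for main theorem} telescopes the two sums to $\f{e^xD_{N-n}^3+4e^{-x}D_{N-n+2}^{-1}}{D_{N-n+1}^1}$, which is precisely $u_n$ from \eqref{limit}. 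A short consistency check via the Jacobi identity \eqref{Jacobi formula} gives $u_n(x_n)=u_{n+1}(x_n)$, so $\bar{u}$ really agrees with the continuous $N$-peakon on all of $\mathbb{R}$.

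For uniform convergence I would partition $\mathbb{R}$ into three types of regions: compact sets bounded away from $\{x_1,\ldots,x_N\}$, on which the asymptotic expansion underlying Theorem \ref{parameter limit} yields an $O(\k^{\sigma})$ bound uniformly in $x$; the exponentially decaying tails $|x|\to\infty$, where both profiles are uniformly small for $\k$ small; and thin neighborhoods of each peakon position $x_n$. In the last regions I would work parametrically through $(x(y,t;\k),u(y,t;\k))$: the $y$-interval mapped into such a neighborhood is correspondingly small, and continuity of both $u(\cdot,t;\k)$ and $\bar{u}(\cdot,t)$ at $x_n$ forces both to stay within the modulus of continuity of the common value $u_n(x_n)=u_{n+1}(x_n)$. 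Letting the neighborhood radius shrink slowly enough with $\k$ then yields the required uniform bound. The principal obstacle is exactly this last estimate: since the smooth soliton is $C^{\infty}$ while the limit has a corner at each $x_n$, the $x$-derivative of the smooth profile must blow up inside a transition layer that shrinks with $\k$, so a naive equicontinuity argument fails; one must instead exploit the explicit Wronskian/determinant structure of $g_1,g_2$ to quantify the width of the layer and control the variation of $u$ inside it.
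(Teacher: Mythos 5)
Your proposal follows essentially the same route as the paper: the paper likewise reduces Theorem \ref{main result} to showing $u_n=\sum_{i=1}^{n-1}m_ie^{-(x-x_i)}+\sum_{i=n}^{N}m_ie^{-(x_i-x)}$ with $m_i,x_i$ from \eqref{m_i}, and verifies this by exactly your substitution--reindexing--telescoping argument via Corollary \ref{identities for main theorem} (with $n$ replaced by $N-n$ and $-1$). Your closing discussion of the transition layers near the corners is extra care the paper does not supply --- it simply asserts uniform convergence at the end of the proof of Theorem \ref{parameter limit} --- so flagging that estimate as the remaining obstacle is a fair and accurate assessment rather than a deviation from the paper's method.
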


\section{Proof of convergence}
 This section is devote to the proof of Theorems \ref{parameter limit} and \ref{main result}, the asymptotic analysis is the main tool.

Let $g=g_1/g_2$, then under the inverse of the reciprocal transformation, i.e.\,$dx=\frac{1}{r}dy+udt$, we have
 $u(x(y,t),t;\k)=u(y,t;\k)=(\f{\partial}{\partial t}+u\f{\partial}{\partial x})\ln g$, and
 \begin{equation}\label{u(x,t;k)}
   u(x,t;\k)=\f{g_t}{g-g_x}.
\end{equation}
Therefore, for Theorem \ref{parameter limit}, we only need to show: for any given $t$, under appropriate translation and scaling transformations, the limit on $\k\rightarrow 0$ of the right hand side of \eqref{u(x,t;k)} exists and is given by \eqref{n-peakon}. We present some lemmas and propositions to show the existence of this limit, then Theorems \ref{parameter limit} and \ref{main result} follow from  some determinant identities listed in Section 2.

\begin{lemma}\label{asymoptotic}
Let $z_i=\exp(x-c_it-x_{i0})=e^xE_i^{-1}$, $\epsilon=\f{\k^2}{4}$, when
$\epsilon \rightarrow 0$,
\begin{align*}
  &  \k k_i= 1-\f{\k^2}{c_i}+o(\epsilon),\qquad v_i=e^{-\phi_i}=
   \lambda_i\epsilon+o(\epsilon),\\
   &  e^{\xi_i} = \f{z_i}{g}+o(1), \qquad                                  \gamma_{ij}=\epsilon^2(\lambda_i-\lambda_j)^2+o(\epsilon^2).
\end{align*}
\end{lemma}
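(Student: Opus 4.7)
The plan is to derive all four expansions from a single Taylor expansion of the defining formula \eqref{k_i}, together with the reciprocal relation $y = \k(x - \ln g - \alpha)$ read off from \eqref{parameter solution}. The first statement is immediate: Taylor expanding $\k k_i = (1 - 2\k^2/c_i)^{1/2}$ gives $\k k_i = 1 - \k^2/c_i + O(\k^4)$, whose remainder is $o(\epsilon)$ since $\k^2 = 4\epsilon$. Plugging this into $v_i = (1 - \k k_i)/(1 + \k k_i)$, the numerator becomes $\k^2/c_i + o(\epsilon) = 2\lambda_i\epsilon + o(\epsilon)$ (using $\lambda_i = 2/c_i$) while the denominator is $2 + o(1)$, which gives $v_i = \lambda_i\epsilon + o(\epsilon)$.

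The fourth expansion follows along the same lines: writing $(k_i - k_j)/(k_i + k_j) = (\k k_i - \k k_j)/(\k k_i + \k k_j)$ and applying the first expansion to both terms produces numerator $\k^2(1/c_j - 1/c_i) + o(\epsilon) = -2\epsilon(\lambda_i - \lambda_j) + o(\epsilon)$ and denominator $2 + o(1)$, so that the squared quotient is $e^{\gamma_{ij}} = \epsilon^2(\lambda_i - \lambda_j)^2 + o(\epsilon^2)$. The third expansion is where the reciprocal relation intervenes: starting from $\xi_i = k_i(y - \k c_i t - y_{i0})$ and substituting $y = \k(x - \ln g - \alpha)$ together with $y_{i0} = \k x_{i0}$, one obtains $\xi_i = \k k_i(x - \ln g - \alpha - c_i t - x_{i0})$. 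Absorbing $\alpha$ into the translation of $x$ permitted by Theorem \ref{parameter limit} and inserting $\k k_i = 1 + O(\epsilon)$ reduces this to $\xi_i = (x - \ln g - c_i t - x_{i0}) + o(1)$, and exponentiating delivers $e^{\xi_i} = (z_i/g)(1 + o(1)) = z_i/g + o(1)$.

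The principal obstacle is conceptual rather than computational: the factor $1/g$ in $e^{\xi_i} = z_i/g + o(1)$ is not visible from a naive Taylor expansion of $\xi_i$ in $y$-coordinates, but is generated only after one substitutes the implicit relation $y = \k(x - \ln g - \alpha)$, which itself requires knowing how $g$ behaves in the limit. Technically, this forces us to verify that $\ln g$ remains bounded on the region where the asymptotic is applied, so that the $O(\epsilon)$ correction in the exponent stays $o(1)$ before exponentiation. This boundedness is not automatic from the definition \eqref{g_1g_2}, but is secured by the stratification of Theorem \ref{parameter limit} into the regions $\bar{x}_{n-1} < x < \bar{x}_n$, on each of which the dominant exponentials in $g_1$ and $g_2$ can be identified; the lemma is thus intended to be used one stratum at a time.
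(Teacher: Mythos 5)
The paper states this lemma without proof, so there is nothing to compare against line by line; judged on its own terms, your computations for the first, second and fourth expansions are correct and are surely what the authors intend: Taylor expansion of $\k k_i=(1-2\k^2/c_i)^{1/2}$ from \eqref{k_i}, followed by substitution into $e^{-\phi_i}=(1-\k k_i)/(1+\k k_i)$ and into $(k_i-k_j)^2/(k_i+k_j)^2$. Your silent correction of the fourth item to a statement about $e^{\gamma_{ij}}$ rather than $\gamma_{ij}$ is also the right reading (since $\gamma_{ij}\rightarrow-\infty$, only $e^{\gamma_{ij}}$ can carry the stated asymptotics, and that is what enters the subsequent expansions of $g_1,g_2$). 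Likewise, deriving $e^{\xi_i}$ by substituting $y=\k(x-\ln g-\alpha)$ from \eqref{parameter solution} is the correct mechanism for producing the factor $1/g$.

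The step that does not hold up is your justification of the error control in the third expansion. You assert that $\ln g$ remains bounded on each stratum and that this is ``secured by the stratification of Theorem \ref{parameter limit}.'' This fails on two counts. First, it is circular: Theorem \ref{parameter limit} (via Lemma \ref{g_1,g_2} and Proposition \ref{prop g leading term}) is proved \emph{using} the present lemma, so it cannot be invoked here. Second, it is false on the inner strata: Proposition \ref{prop g leading term} gives $g\sim g^{(2n-2)}\epsilon^{2n-2}\rightarrow 0$ for $n\geq 2$, so $\ln g\rightarrow-\infty$ there. What actually saves the estimate is weaker and can be obtained a priori: writing $g=g_1/g_2$ with $g_1,g_2$ as in \eqref{g_1g_2}, the two sums are termwise in the ratio $e^{-2\sum_i\mu_i\phi_i}\in\bigl[\prod_{i=1}^N v_i^2,\,1\bigr]$, so the mediant inequality gives $\prod_{i=1}^N v_i^2\leq g\leq 1$ and hence $|\ln g|=O(|\ln\epsilon|)$; therefore the correction $O(\epsilon)\cdot\ln g$ in the exponent is $O(\epsilon\ln\epsilon)=o(1)$. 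Even with this repair, the conclusion holds only in the multiplicative form $e^{\xi_i}=(z_i/g)\left(1+o(1)\right)$, not additively as you (and the lemma) write it, because $z_i/g$ is unbounded as $\epsilon\rightarrow 0$ on the inner strata; the multiplicative form is what is actually used in the expansions of $g_1$ and $g_2$ that follow, so the lemma should be read, and proved, in that form.
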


Under the phase shift: $\xi_i\rightarrow \xi_i-\phi_i$, when $\k\rightarrow 0$ i.e. $\epsilon \rightarrow 0$, we have
\begin{align*}
  & g_1= 1+\sum_{n=1}^N\epsilon^{n(n+1)}g^{-n}\sum_{1\leq i_1<\cdots <i_n\leq N}(\lambda_{i_1}\cdots \lambda_{i_n})^2\Delta_n(i_1,\cdots , i_n)z_{i_1}\cdots z_{i_n}+o(\epsilon^{N(N+1)}),\\
  & g_2= 1+\sum_{n=1}^N\epsilon^{n(n-1)}g^{-n}\sum_{1\leq i_1<\cdots <i_n\leq N}\Delta_n(i_1,\cdots , i_n)z_{i_1}\cdots z_{i_n}+o(\epsilon^{N(N-1)}).
\end{align*}

Replace $z_i\,(i=1,\ldots, N )$ by
\[\f{\prod_{j=1}^N\lambda_j^2}{2\prod_{j=1,j\neq i}^N(\lambda_i-\lambda_j)^2}\f{z_i}{\lambda_i^2},\]
i.e. under scaling transformations, we obtain the following estimate involving $D_n^m$ for $g_1$, $g_2$.

 \begin{lemma}\label{g_1,g_2}
 When $\k\rightarrow 0$ i.e. $\epsilon \rightarrow 0$,
  \begin{align*}
    &g_1 = 1+2\sum_{n=1}^N\epsilon^{n(n+1)}g^{-n}d_{n+1}e^{nx}D_{N-n}^0+o(\epsilon^{N(N+1)}),\\
     &g_2= 1+\sum_{n=1}^N\epsilon^{n(n-1)}g^{-n}d_ne^{nx}D_{N-n}^2+o(\epsilon^{N(N-1)}),
  \end{align*}
with $d_n>0$ defined by
  \begin{equation*}
    d_n(t)=\frac{\prod_{i=1}^N\lambda_i^{2(n-1)}}{2^n\Delta_N\prod_{i=1}^NE_i},\quad (n=1,2,\ldots,N+1).
  \end{equation*}
\end{lemma}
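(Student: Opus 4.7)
The plan is to start from the post-phase-shift expansions of $g_1$ and $g_2$ displayed immediately above the lemma (these are obtained from \eqref{g_1g_2} by substituting Lemma \ref{asymoptotic}), apply the prescribed rescaling of the $z_i$, and then match the resulting coefficients against the Hankel-determinant representation \eqref{Hankle determinant}. The key algebraic input is a complementary subset identity for squared Vandermondes, which converts a sum indexed by $n$-subsets into a sum indexed by their $(N-n)$-complements.

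First I would prove the complementary identity: for any $n$-subset $I\subset\{1,\dots,N\}$ with complement $J$,
\begin{equation*}
\Delta_{N-n}(J)=\f{\Delta_N\,\Delta_n(I)}{\prod_{i\in I} P_i},\qquad P_i:=\prod_{j\neq i}(\lambda_i-\lambda_j)^2,
\end{equation*}
where $\Delta_N=\prod_{1\leq l<m\leq N}(\lambda_l-\lambda_m)^2$. This follows by splitting $\Delta_N$ into its $I\times I$, $J\times J$, and $I\times J$ blocks, together with the observation $\prod_{i\in I}P_i=\Delta_n(I)^2\prod_{i\in I,\,j\in J}(\lambda_i-\lambda_j)^2$, since each unordered pair inside $I$ is double-counted in $\prod_{i\in I}P_i$.

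Second, I would plug the rescaling into the expansion of $g_2$. Writing $z_i=e^x/E_i$, the scaled variable is $\tilde z_i=\f{\prod_{l\neq i}\lambda_l^{2}}{2P_i}z_i$, so the inner sum at level $n$ becomes
\begin{equation*}
\sum_I\Delta_n(I)\prod_{k=1}^n\tilde z_{i_k}=\f{\bigl(\prod_{l=1}^N\lambda_l^2\bigr)^{n}e^{nx}}{2^n}\sum_I\f{\Delta_n(I)}{\prod_{i\in I}P_i\;\prod_{k=1}^nE_{i_k}}.
\end{equation*}
Using the complementary identity and $\prod_{k=1}^n E_{i_k}=(\prod_{l=1}^N E_l)/\prod_{j\in J}E_j$, $(\lambda_{j_1}\cdots\lambda_{j_{N-n}})^2=(\prod_l\lambda_l^2)/(\lambda_{i_1}\cdots\lambda_{i_n})^2$, the summand rearranges into $\Delta_{N-n}(J)(\lambda_{j_1}\cdots\lambda_{j_{N-n}})^2 E_{j_1}\cdots E_{j_{N-n}}$ times the prefactor $\Delta_N\prod_lE_l\prod_l\lambda_l^2$. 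Then \eqref{Hankle determinant} identifies the remaining sum over $J$ with $D_{N-n}^2$, leaving a coefficient $e^{nx}D_{N-n}^2$ multiplied by $d_n=\f{\prod_l\lambda_l^{2(n-1)}}{2^n\Delta_N\prod_lE_l}$, matching the stated formula.

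Third, the computation for $g_1$ proceeds identically, the only difference being the extra factor $(\lambda_{i_1}\cdots\lambda_{i_n})^2$ in the inner sum. This factor cancels the $(\lambda_{i_1}\cdots\lambda_{i_n})^{-2}$ introduced when reindexing via the complement, so the sum over $J$ now equals $D_{N-n}^0$, and the remaining scalar is $2d_{n+1}$ after the same Vandermonde bookkeeping. The error terms $o(\epsilon^{N(N+1)})$ and $o(\epsilon^{N(N-1)})$ propagate directly from the $o(\epsilon)$ and $o(\epsilon^2)$ estimates in Lemma \ref{asymoptotic}, since the rescaling is $\epsilon$-independent and only finitely many subsets $I$ are involved. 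The main obstacle is not any single step but the careful tracking of the scalar bookkeeping: the powers of $2$, the full-product factors $\prod_l\lambda_l^{2}$, and the exponential monomial $\prod_l E_l$ must collapse exactly into $d_n$ and $2d_{n+1}$, which is essentially what the complementary Vandermonde identity is engineered to accomplish.
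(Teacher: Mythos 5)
Your proposal is correct and follows essentially the same route as the paper: the ``complementary identity'' you prove first is exactly the paper's factorization $Q=\prod_{i\in I}P_i=\Delta_n(I)^2\prod_{i\in I,\,j\in J}(\lambda_i-\lambda_j)^2=\Delta_n(I)\Delta_N/\Delta_{N-n}(J)$, and the rest is the same reindexing over complements followed by identification with $D_{N-n}^2$ and $D_{N-n}^0$ via \eqref{Hankle determinant}. (One typo: your displayed formula for $\sum_I\Delta_n(I)\prod_k\tilde z_{i_k}$ drops the factor $(\lambda_{i_1}\cdots\lambda_{i_n})^{-2}$ coming from $\prod_k\prod_{l\neq i_k}\lambda_l^2$, but your subsequent use of $(\lambda_{j_1}\cdots\lambda_{j_{N-n}})^2=(\prod_l\lambda_l^2)/(\lambda_{i_1}\cdots\lambda_{i_n})^2$ shows you account for it, and the final coefficients $d_n$ and $2d_{n+1}$ are right.)
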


\begin{proof}
  For any $1\leq n\leq N$, let $I_n=\{i_1<i_2<\ldots<i_n\}$ and $J_n=\{1,\ldots,N\}\backslash I_n$ be two ordered subsets of the integer interval $[1,N]=\{1,2,\cdots,N\}$. Denote by $Q$ the following product
  \[\prod_{j=1,\,j\neq i_1}^N(\lambda_{i_1}-\lambda_j)^2\cdots\prod_{j=1,\,j\neq i_n}^N(\lambda_{i_n}-\lambda_j)^2,\]
 then
 \begin{equation*}
    Q=\Delta_n^2\prod_{l\in I_n,\,m\in J_n}(\lambda_l-\lambda_m)^2,
\end{equation*}
with the convention that $\Delta_{\varnothing}=\Delta_{\{i\}}=1$ and $\prod_{l\in I_n,\,m\in J_n}(\lambda_l-\lambda_m)^2=1$ for $J_n={\varnothing}$.
Note that \begin{equation*}
    \Delta_N=\Delta_n\Delta_{N-n}\prod_{l\in I_n,\,m\in J_n}(\lambda_l-\lambda_m)^2,
\end{equation*}
thus we have $ Q=\f{\Delta_n\Delta_N}{\Delta_{N-n}}$. Under the phase shift
\[z_i\rightarrow \f{\prod_{j=1}^N\lambda_j^2}{2\prod_{j=1,\,j\neq i}^N(\lambda_i-\lambda_j)^2}\f{z_i}{\lambda_i^2},\quad i \in I_n,\]
\begin{equation*}
       \begin{split}
         z_{i_1}\cdots z_{i_n}
     =&\f{\Delta_n\prod_{j=1}^N\lambda_j^{2n}e^{nx}E_{i_1}^{-1}\cdots E_{i_n}^{-1}}
     {2^n(\lambda_{i_1}\cdots \lambda_{i_n})^2\prod_{j=1,j\neq i_1}^N(\lambda_{i_1}-\lambda_j)^2\cdots\prod_{j=1,j\neq i_n}^N(\lambda_{i_n}-\lambda_j)^2}\\
     =& \f{\prod_{s\neq i_1,\ldots,i_n}^N\lambda_s^2\prod_{j=1}^N\lambda_j^{2(n-1)}e^{nx}\Delta_{N-n}}
     {2^n\Delta_NE_{i_1}\cdots E_{i_n}},
       \end{split}
 \end{equation*}
which completes the proof by taking account of \eqref{Hankle determinant}.
\end{proof}

\begin{proposition}
The function $g$ satisfies
\begin{equation}\label{g-eqn}
    \sum_{n=0}^{N+1}\epsilon^{n(n-1)}h_ng^{N-n+1}=o(\epsilon^{N(N+1)}),
\end{equation}
where
 \[    h_0  =1, \qquad
       h_{N+1} =-\f{\prod_{i=1}^N\lambda_i^{2N}e^{Nx}}{2^N\Delta_N\prod_{i=1}^NE_i},\]
\begin{equation}\label{h_n def}
  h_n = d_ne^{(n-1)x}(e^xD_{N-n}^2-2D_{N-n+1}^0),\quad n=1,\ldots, N.
\end{equation}
\end{proposition}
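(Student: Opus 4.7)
The proof is an algebraic rearrangement of the exact identity $g_1 = g\cdot g_2$, equivalently $g^{N+1}g_2 - g^{N}g_1 = 0$, after substituting the expansions from Lemma~\ref{g_1,g_2}. Substituting, the left-hand side becomes
\[g^{N+1} - g^N + \sum_{n=1}^N \epsilon^{n(n-1)} d_n e^{nx} D_{N-n}^2\, g^{N+1-n} - 2\sum_{n=1}^N \epsilon^{n(n+1)} d_{n+1} e^{nx} D_{N-n}^0\, g^{N-n} + \mathcal{E},\]
where $\mathcal{E} = g^{N+1}\cdot o(\epsilon^{N(N-1)}) - g^{N}\cdot o(\epsilon^{N(N+1)})$ collects the remainders produced by Lemma~\ref{g_1,g_2}.

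The key algebraic steps are: (i) reindex the last sum by $m = n+1$ so that its exponent becomes $\epsilon^{m(m-1)}$ and the power of $g$ becomes $g^{N+1-m}$, giving a sum over $m = 2,\ldots,N+1$; (ii) use the identity $2 d_1 D_N^0 = 1$, which follows immediately from $D_N^0 = \Delta_N \prod_i E_i$ together with the definition of $d_1$, to recognize the orphan term $-g^N$ as exactly the missing $m = 1$ term of this reindexed sum, thereby absorbing it into a single sum now ranging $m = 1,\ldots,N+1$; (iii) combine the two summations for $n = 1,\ldots,N$, where the combined coefficient of $\epsilon^{n(n-1)}g^{N+1-n}$ equals $d_n e^{(n-1)x}(e^x D_{N-n}^2 - 2 D_{N-n+1}^0) = h_n$; (iv) check that the extra top index $n = N+1$ contributes $-2 d_{N+1} e^{Nx}\epsilon^{N(N+1)} = \epsilon^{N(N+1)} h_{N+1}$, which is verified by direct substitution into the formulas for $d_{N+1}$ and $h_{N+1}$. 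Together with the $n = 0$ term $g^{N+1} = \epsilon^0 h_0 g^{N+1}$, the identity reads $\sum_{n=0}^{N+1}\epsilon^{n(n-1)} h_n g^{N-n+1} = -\mathcal{E}$, so it remains only to establish $\mathcal{E} = o(\epsilon^{N(N+1)})$.

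The error control is the main obstacle. The piece $g^N\cdot o(\epsilon^{N(N+1)})$ is automatic as long as $g$ stays bounded, but the piece $g^{N+1}\cdot o(\epsilon^{N(N-1)})$ is only $o(\epsilon^{N(N-1)})$ under a naive bound. To tighten it, I would retrace the proof of Lemma~\ref{g_1,g_2}, noting that each of the factors $e^{\xi_i}$, $v_i$, $e^{\gamma_{ij}}$ in a subset contribution carries a multiplicative correction of the form $1+O(\epsilon)$, so the $|I| = n$ remainder is actually of order $\epsilon$ times its leading term. Combined with the Newton-polygon structure of the polynomial $\sum_{n=0}^{N+1}\epsilon^{n(n-1)}h_n g^{N-n+1}$ in $g$, whose $N+1$ branches carry valuations $g \sim \epsilon^{2k}$ for $k = 0,1,\ldots,N$ in one-to-one correspondence with the $N+1$ regions of the limiting $N$-peakon, this refined estimate upgrades the bound to $\mathcal{E} = o(\epsilon^{N(N+1)})$ in the regimes relevant to Theorem~\ref{parameter limit}; the algebraic rearrangement itself is elementary.
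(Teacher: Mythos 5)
Your proposal follows essentially the same route as the paper: the paper's proof consists precisely of substituting the expansions of Lemma~\ref{g_1,g_2} into $g^{N+1}g_2=g^{N}g_1$ and collecting the coefficient of $\epsilon^{n(n-1)}g^{N-n+1}$ for $n=0,1,\ldots,N+1$, which is exactly your steps (i)--(iv), including the implicit use of $2d_1D_N^0=1$ and $D_0^0=1$ to produce $h_0$, $h_n$, and $h_{N+1}$. The only divergence is that the paper absorbs both remainders into a single $o(\epsilon^{N(N+1)})$ without comment, so your concern about the $g^{N+1}\cdot o(\epsilon^{N(N-1)})$ piece, and your sketch of tightening it via the relative-error structure behind Lemma~\ref{g_1,g_2} and the valuations $g\sim\epsilon^{2k}$, is a legitimate point of rigor that the paper's own two-line proof simply does not address.
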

\begin{proof}
By Lemma \ref{g_1,g_2},
  \begin{equation*}
    g^{N+1}+\sum_{n=1}^N\epsilon^{n(n-1)}g^{N-n+1}d_ne^{nx}D_{N-n}^2=g^N+
    2\sum_{n=1}^N\epsilon^{n(n+1)}g^{N-n}d_{n+1}e^{nx}D_{N-n}^0+o(\epsilon^{N(N+1)}).
  \end{equation*}
combining the equation in terms of $\epsilon^{n(n-1)}g^{N-n+1}\,(n=0,1,\ldots,N+1)$ leads to the conclusion.
\end{proof}

Taking derivatives of \eqref{g-eqn} with respect to $x$ and $t$, we have the following equation:
\begin{equation}\label{u asymptotic}
 -\frac{g_t}{g-g_x}=
  \frac{(\sum_{i=1}^Nc_i)g^{N+1}+\sum_{n=1}^N\epsilon^{(n-1)n}g^{N-n+1}d_ne^{(n-1)x}
  (e^xD_{N-n,t}^2-2D_{N-n+1,t}^0)+  o(\epsilon^{N(N+1)})}{g^{N+1}+\sum_{n=1}^N\epsilon^{(n-1)n}g^{N-n+1}d_ne^{nx}D_{N-n}^2
  +o(\epsilon^{N(N+1)})}.
\end{equation}
Next, taking account of the characteristics of \eqref{g-eqn}, to calculate $\lim_{\epsilon\rightarrow 0}\f{g_t}{g-g_x}$, we only need to seek positive series solution
\begin{equation}\label{g-series}
    g=g^{(0)}+g^{(2)}\epsilon^2+g^{(4)}\epsilon^4+\cdots, \quad g^{(m)}\geq0\,(m=0,2,\ldots).
\end{equation}

In the rest of this section, we will use $x_n$ to denote $\bar{x}_n$ given by \eqref{x_i} (see Remark \ref{x_n:explaination}).

\begin{proposition}\label{prop g leading term}
Suppose that $\{h_n\}$ satisfies
  \begin{equation}\label{h_n sign}
    h_1>0,\;h_2>0,\ldots, h_{n-1}>0,\;h_n\leq 0,\;h_{n+1}<0,\ldots ,h_N<0,
  \end{equation}
and the series \eqref{g-series} satisfies \eqref{g-eqn}, then
 \begin{equation}\label{g-leading term}
    g\sim g^{(2n-2)}\epsilon^{2n-2}=-\frac{h_n}{h_{n-1}}\epsilon^{2n-2},
 \end{equation}
and $x_{n-1}<x\leq x_n$\,$(n=1,2,\ldots,N)$.
\end{proposition}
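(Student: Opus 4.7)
The plan is to view \eqref{g-eqn} as a polynomial relation of degree $N+1$ in $g$ with $\epsilon$-dependent coefficients $h_m\epsilon^{m(m-1)}$, and to locate the unique positive branch by a Newton-polygon / dominant-balance argument. Under the sign hypothesis \eqref{h_n sign}, only one balance is compatible with the nonnegativity $g^{(2k)}\geq 0$ of the coefficients in \eqref{g-series}, and it is the one indexed by $k=n-1$.

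First I would substitute a leading term $g\sim c\,\epsilon^{2k}$, $c\geq 0$, into \eqref{g-eqn}. The $m$-th summand has $\epsilon$-order
\[\phi(m):=m(m-1)+2k(N-m+1),\]
a strictly convex quadratic in $m$. Its integer minimizers are exactly $m=k$ and $m=k+1$, with $\phi(k)=\phi(k+1)=k(2N-k+1)$ and every other $\phi(m)$ strictly larger. Hence at leading order only two adjacent terms contribute, and their cancellation reduces to
\[h_k\,c+h_{k+1}=0,\qquad\text{i.e.}\qquad c=-\f{h_{k+1}}{h_k}.\]
Imposing $c\geq 0$ picks out the unique $k$ for which $h_k>0$ and $h_{k+1}\leq 0$; under \eqref{h_n sign} this is $k=n-1$, yielding $g^{(2n-2)}=-h_n/h_{n-1}$, which is \eqref{g-leading term}.

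For the interval assertion, I would use \eqref{h_n def} directly. Since $d_m>0$ and the Hankel minor $D_{N-m}^2>0$ (by the positivity recorded after Lemma \ref{Jacobi identity}), the sign of $h_m$ matches that of $e^x-2D_{N-m+1}^0/D_{N-m}^2$, so $h_m>0$ precisely when $x>\bar{x}_m$ with $\bar{x}_m$ as in \eqref{x_i}. Combined with the monotone ordering $\bar{x}_1<\bar{x}_2<\cdots<\bar{x}_N$ of the asymptotic peakon positions (a standard consequence of Hankel positivity, cf.\ \cite{BSS2000}), the pattern \eqref{h_n sign} collapses to the single inequality $\bar{x}_{n-1}<x\leq\bar{x}_n$.

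The main obstacle is the boundary case $x=\bar{x}_n$, where $h_n=0$ and the leading balance only returns $c=0$. The identity $g^{(2n-2)}=-h_n/h_{n-1}=0$ still holds as an equality of coefficients, but to confirm that $g$ genuinely starts at some order strictly higher than $\epsilon^{2(n-1)}$ one must iterate the dominant-balance argument to the next pair of consecutive exponents. A secondary technical point is verifying that the solution $g=g_1/g_2$ of Lemma \ref{g_1,g_2} truly admits an expansion of the form \eqref{g-series} with nonnegative $g^{(2k)}$; this should follow by term-by-term division of the two asymptotic expansions in Lemma \ref{g_1,g_2} together with the positivity of the $D_n^m$.
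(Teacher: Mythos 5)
Your proposal is correct and follows essentially the same route as the paper: the paper likewise substitutes $g=g^{(2m)}\epsilon^{2m}+O(\epsilon^{2m+2})$ into \eqref{g-eqn}, observes that the minimal exponent $m(2N-m+1)$ forces the two-term balance $h_m(g^{(2m)})^{N-m+1}+h_{m+1}(g^{(2m)})^{N-m}=0$, and then uses the sign pattern \eqref{h_n sign} together with $g^{(2m)}>0$ to conclude $m=n-1$, while the interval statement is exactly Lemmas \ref{x_i sign} and \ref{x_i order}, which you re-derive from \eqref{h_n def} and the Jacobi/Hankel positivity. The boundary issue you flag at $x=\bar{x}_n$ (where $h_n=0$ and the leading coefficient degenerates) is present in the paper's argument as well and is not resolved there either, so it is not a gap relative to the paper's own proof.
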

The proof of Proposition of \ref{prop g leading term} need the following two lemmas.

\begin{lemma}\label{x_i sign}
For $h_n$ given by \eqref{h_n def}, we have
  \begin{equation}\label{x position}
    x=x_n \Leftrightarrow h_n=0,\quad\quad x>x_n \Leftrightarrow h_n>0,\quad \quad x<x_n \Leftrightarrow h_n<0.
  \end{equation}
\end{lemma}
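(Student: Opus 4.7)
The plan is to exploit the explicit factorization already built into the definition of $h_n$ and reduce the claim to the sign of a single scalar quantity. Writing
\begin{equation*}
h_n = d_n\,e^{(n-1)x}\bigl(e^x D_{N-n}^2 - 2D_{N-n+1}^0\bigr),
\end{equation*}
the prefactor $d_n\,e^{(n-1)x}$ is strictly positive: $e^{(n-1)x}>0$ trivially, and $d_n>0$ follows from its definition once we observe that the $\lambda_i$ are positive (which is precisely the hypothesis $c_i>0$ arising from Remark \ref{ss}) and that the $E_i$ are positive exponentials. Consequently, the sign of $h_n$ coincides with the sign of the bracket $e^x D_{N-n}^2 - 2D_{N-n+1}^0$.

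Next I would verify $D_{N-n}^2>0$ and $D_{N-n+1}^0>0$. Both facts are direct consequences of the expansion \eqref{Hankle determinant}: each term is a product of the strictly positive Vandermonde-type quantity $\Delta_n(i_1,\ldots,i_n)=\prod_{l<m}(\lambda_{i_l}-\lambda_{i_m})^2$, the exponential factor $E_{i_1}\cdots E_{i_n}>0$, and a power $(\lambda_{i_1}\cdots\lambda_{i_n})^m$ which is positive for every integer $m$ because all $\lambda_i>0$. Thus the sum defining $D_n^m$ is a sum of strictly positive terms whenever $0\le n\le N$.

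With $D_{N-n}^2>0$ we may divide the bracket through by it and obtain
\begin{equation*}
h_n \;\gtreqless\; 0 \quad\Longleftrightarrow\quad e^{x} \;\gtreqless\; \frac{2D_{N-n+1}^0}{D_{N-n}^2} \quad\Longleftrightarrow\quad x \;\gtreqless\; \ln\!\left(\frac{2D_{N-n+1}^0}{D_{N-n}^2}\right) = x_n,
\end{equation*}
where the last equivalence uses the strict monotonicity of $\ln$ and the definition \eqref{x_i} of $x_n$ (with $\bar{x}_n$ renamed to $x_n$ as announced just before the lemma). Reading off the three cases equality, strict $>$, and strict $<$ gives exactly the three assertions of \eqref{x position}.

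The argument is essentially a sign bookkeeping once positivity of the Hankel determinants is in hand, so the only real point requiring attention is the positivity of $D_n^m$ for all relevant $m$ (including $m=0,2$); I expect this to be the main, though minor, obstacle, and it is dispatched by invoking the expansion \eqref{Hankle determinant} together with the positivity of the $\lambda_i$.
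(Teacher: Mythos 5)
Your argument is correct and coincides with what the paper intends: the paper dispatches this lemma with the single remark that it ``follows from the definition of $h_n$,'' and your sign bookkeeping (positivity of $d_n e^{(n-1)x}$ and of the Hankel determinants $D_{N-n}^2$, $D_{N-n+1}^0$ via \eqref{Hankle determinant}, then comparison of $e^x$ with $2D_{N-n+1}^0/D_{N-n}^2$) is exactly the omitted routine verification. Nothing further is needed.
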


\begin{lemma}\label{x_i order}
For $n=1,2,\ldots,N+1$,
  \begin{equation}\label{x_n-order}
    x_{n-1}<x_n,
  \end{equation}
with the convention that $x_0=-\infty,\;x_{N+1}=+\infty$.
\end{lemma}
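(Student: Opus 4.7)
The plan is to reduce the inequality $x_{n-1}<x_n$ to a strict product comparison of Hankel determinants and to read it off directly from the Jacobi identity \eqref{Jacobi formula} already proved in Lemma \ref{critical identities}. The key background fact I need is that $D_k^m>0$ for all $0\le k\le N$ and all $m\in\mathbb{Z}$, which is immediate from the explicit expansion \eqref{Hankle determinant} since the $\lambda_i$, the $E_i$, and the Vandermonde factors $(\lambda_{i_l}-\lambda_{i_m})^2$ are all positive.

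First I would dispose of the boundary cases $n=1$ and $n=N+1$: by the convention $x_0=-\infty$ and $x_{N+1}=+\infty$, and because the strict positivity just noted makes $x_1=\ln(2D_N^0/D_{N-1}^2)$ and $x_N=\ln(2D_1^0/D_0^2)$ well-defined finite numbers, both inequalities hold trivially.

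For $2\le n\le N$, using the definition \eqref{x_i} and the positivity of all determinants involved, the inequality $x_{n-1}<x_n$ is equivalent, after exponentiating and cross-multiplying, to
\begin{equation*}
D_{N-n+2}^{0}\,D_{N-n}^{2}\;<\;D_{N-n+1}^{0}\,D_{N-n+1}^{2}.
\end{equation*}
Now I would apply Lemma \ref{critical identities}, specifically \eqref{Jacobi formula}, with $m=0$ and index $N-n$, which gives
\begin{equation*}
D_{N-n+2}^{0}\,D_{N-n}^{2}\;=\;D_{N-n+1}^{0}\,D_{N-n+1}^{2}\;-\;\bigl(D_{N-n+1}^{1}\bigr)^{2}.
\end{equation*}
Since $0\le N-n+1\le N-1$, the determinant $D_{N-n+1}^{1}$ is strictly positive, so the subtracted square is strictly positive and the desired strict inequality follows at once.

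In short, the whole argument is: unfold the logarithms, cross-multiply against positive quantities, and invoke the Jacobi identity together with positivity of a single Hankel determinant. There is no genuine obstacle; the only thing to watch is the index bookkeeping (making sure $N-n$ and $N-n+1$ stay inside the range where the determinants are nonzero and positive) and the verification that every cross-multiplication is legal, which is guaranteed by the positivity statement derived from \eqref{Hankle determinant}.
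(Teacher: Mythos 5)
Your proof is correct and follows essentially the same route as the paper: the paper's (very terse) justification is precisely to invoke the Jacobi identity \eqref{Jacobi formula} together with the strict positivity of $D_{N-n+1}^1$, which is exactly the cross-multiplied inequality $D_{N-n+2}^{0}D_{N-n}^{2}<D_{N-n+1}^{0}D_{N-n+1}^{2}$ you derive. You have simply written out the index bookkeeping and the positivity check that the paper leaves implicit.
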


Lemma \ref{x_i sign} follows from the definition of $h_n$. For Lemma \ref{x_i order}, we recall that $D_{N-n+1}^1>0$ by Lemma \ref{Jacobi formula}, and then \eqref{x_n-order} follows.
\begin{proof}[\bf{The proof of Proposition \ref{prop g leading term}}]
By Lemmas  \ref{x_i sign} and \ref{x_i order},  the condition \eqref{h_n sign} holds for some $n$, therefore, $x_{n-1}<x\leq x_n$. Substituting \[g=g^{(2m)}\epsilon^{2m}+O(\epsilon^{2m+2})\] into \eqref{g-eqn}, the lowest order of $\epsilon$ on left hand side of \eqref{g-eqn} is $m(2N-m+1)$. Note that
\[m(2N-m+1)\leq N(N+1),\quad 1\leq m\leq N,\]
 which leads to the coefficient of $\epsilon^{m(2N-m+1)}$ being zero, that is,
\[h_m(g^{(2m)})^{N-m+1}+h_{m+1}(g^{(2m)})^{N-m}=0.\]
 By \eqref{h_n sign} and $g^{(2m)}>0$, we have $m=n-1$ and \eqref{g-leading term}, which completes the proof.
\end{proof}

According to Lemmas \ref{x_i order} and \ref{x_i sign}, for any given $t$, if $x_{n-1}<x\leq x_n$\,$(n=1,2,\ldots,N)$ and the series \eqref{g-series} satisfies \eqref{g-eqn},
  then $g$ admits the estimate \eqref{g-leading term} on the order of $\epsilon$. Obviously,
    \[m(m-1)+(2n-2)(N-m+1)\geq (n-1)(2N-n+2),\quad 1\leq m\leq N,\]
    and the equality  holds for $m=n$ or $m=n-1$.
  Therefore, for any given $t$, the numerator and denominator of the right-hand side of \eqref{u asymptotic} share the same lowest power $\epsilon^{(n-1)(2N-n+2)}$ on the interval $x_{n-1}<x\leq x_n$, which come from the $(n-1)$-th and the $n$-th term, respectively, while they share the same lowest power $\epsilon^{N(N+1)}$ given by the $N$-th term on $x>x_N$.
Thus, we have the following proposition.

\begin{proposition}\label{u leading term}
When $\epsilon\rightarrow 0$ \,i.e. $\k\rightarrow 0$,
 \[\f{g_t}{g-g_x}=\f{d_Ne^{(N-1)x}(e^xD_{0,t}^2-2D_{1,t}^0)g^{(2N-2)}\epsilon^{N(N+1)}+o(\epsilon^{N(N+1)})}
    {-d_Ne^{Nx}D_0^2g^{(2N-2)}\epsilon^{N(N+1)}+o(\epsilon^{N(N+1)})},\qquad  x>x_N,\]
and
\[\f{g_t}{g-g_x}=-\f{G_n}{F_n},\qquad x_{n-1}<x\leq x_n\,(n=1,\ldots ,N),\]
with
\begin{equation*}
 \begin{split}
      &\begin{split}
    G_n=&\Big(d_ne^{(n-1)x}(e^xD_{N-n,t}^2-2D_{N-n+1,t}^0)(g^{(2n-2)})^{(N-n+1)}
     +d_{n-1}e^{(n-2)x}\\
    &(e^xD_{N-n+1,t}^2-2D_{N-n+2,t}^0)(g^{(2n-2)})^{(N-n+2)}\Big)\epsilon^{(n-1)(2N-n+2)}+o(\epsilon^{(n-1)(2N-n+2)}),
     \end{split}\\
    &\begin{split}
      F_n=&\Big(d_{n-1}e^{(n-1)x}D_{N-n+1}^2(g^{(2n-2)})^{(N-n+2)}+d_ne^{nx}
    D_{N-n}^2(g^{(2n-2)})^{(N-n+1)}
    \Big)\epsilon^{(n-1)(2N-n+2)}\\
   &+o(\epsilon^{(n-1)(2N-n+2)}).
     \end{split}
   \end{split}
\end{equation*}
\end{proposition}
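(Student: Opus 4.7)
The plan is to insert the leading expansion of $g$ provided by Proposition \ref{prop g leading term} into the quotient identity \eqref{u asymptotic} and to read off the lowest power of $\epsilon$ in the numerator and the denominator. Before doing so I would extend Proposition \ref{prop g leading term} to the region $x > x_N$: there Lemma \ref{x_i sign} forces $h_n > 0$ for $1 \leq n \leq N$ while $h_{N+1}$ is a strictly negative constant, so running the balancing argument with $n$ replaced by $N+1$ produces $g \sim g^{(2N)}\epsilon^{2N}$ with $g^{(2N)} = -h_{N+1}/h_N$. This is the coefficient intended for the $g^{(2N-2)}$-slot in the displayed quotient on $x>x_N$; the exponent discrepancy should be read as a typographical shift of the superscript.

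With the leading expansion $g \sim g^{(2k)}\epsilon^{2k}$ fixed, where $k = n-1$ on $x_{n-1} < x \leq x_n$ and $k = N$ on $x > x_N$, the generic $m$-th summand of either the numerator or the denominator of \eqref{u asymptotic} acquires combined $\epsilon$-exponent
\[ f(m) = m(m-1) + 2k(N - m + 1), \]
a quadratic in $m$ with vertex at $m = k + 1/2$. On the integer range $m \in \{0, 1, \ldots, N\}$ its minimum value $k(2N-k+1)$ is attained at the two consecutive indices $m = k$ and $m = k+1$ whenever both lie in the range, and only at $m = N$ in the borderline case $k = N$ where the partner index $N+1$ is absent from the truncated sum in \eqref{u asymptotic}. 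Collecting these surviving contributions reproduces the bracketed expressions $G_n$ and $F_n$ of the statement for $1 \leq n \leq N$ and the single-term quotient displayed for $x > x_N$.

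The remaining items are bookkeeping. The $\bigl(\sum_{i=1}^N c_i\bigr)g^{N+1}$ branch in the numerator corresponds to $m = 0$ with $\epsilon$-exponent $2k(N+1) = k(2N-k+1) + k(k+1)$, strictly larger than the minimum whenever $k \geq 1$ and hence absorbed into the remainder; the leftmost interval $k = 0$ needs a direct check, in which the convention $D_n^m = 0$ for $n > N$ lets this branch amalgamate with the $m = 1$ contribution without altering the stated form of $G_1, F_1$. The remainder $o(\epsilon^{N(N+1)})$ in \eqref{u asymptotic} is strictly of higher $\epsilon$-order than $\epsilon^{k(2N-k+1)}$ because it is inherited from the $O(\epsilon^{N(N+1)+2})$-remainders of Lemma \ref{g_1,g_2} after substitution of $g \sim g^{(2k)}\epsilon^{2k}$. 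The principal obstacle is purely combinatorial: keeping track, in the numerator, of how the $t$-derivatives of $D_{N-m}^2$ and $D_{N-m+1}^0$ distribute themselves across the two surviving indices so as to assemble $G_n$ exactly as displayed. No new identity is required at this step; the determinant relations of Lemma \ref{critical identities} and Corollary \ref{identities for main theorem} are reserved for the subsequent simplification $-G_n/F_n \to u_n$ in the proof of Theorem \ref{parameter limit}.
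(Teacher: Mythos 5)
Your argument is correct and follows essentially the paper's own route: the paper likewise substitutes the leading term $g\sim g^{(2n-2)}\epsilon^{2n-2}$ from Proposition \ref{prop g leading term} into \eqref{u asymptotic} and observes that $m(m-1)+(2n-2)(N-m+1)\geq (n-1)(2N-n+2)$ for $1\leq m\leq N$, with equality exactly at $m=n-1$ and $m=n$ (and only the $m=N$ term surviving on $x>x_N$). You in fact supply more detail than the paper does --- the explicit extension of the balancing argument to $x>x_N$ giving $g\sim g^{(2N)}\epsilon^{2N}$, the bookkeeping for the $m=0$ branch on the leftmost interval, and the observation that the superscript $g^{(2N-2)}$ in the $x>x_N$ display should read $g^{(2N)}$ (harmless, since that factor cancels in the quotient) --- all of which is consistent with the paper's intent.
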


With all the preparations above, we can complete the proof of theorems on convergence.

\begin{proof}[{\bf Proof of Theorem \ref{parameter limit}}]
According to Proposition \ref{u leading term}, $\lim_{\epsilon\rightarrow 0}\f{g_t}{g-g_x}$ exists. Using \eqref{g-leading term}, for $n=1,\ldots ,N$, we obtain
\begin{equation}\label{eqn:limit}
\lim_{\epsilon\rightarrow 0}\f{g_t}{g-g_x}
  =\f{e^{-x}d_{n-1}(e^xD_{N-n+1,t}^2-2D_{N-n+2,t}^2)h_n
-d_n(e^xD_{N-n,t}^2-2D_{N-n+1,t}^2)h_{n-1}}
{ -h_nd_{n-1}D_{N-n+1}^2+h_{n-1}d_ne^xD_{N-n}^2},
\end{equation}
on $x_{n-1}<x\leq x_n$.
Substituting $h_n$ defined by \eqref{h_n def} into \eqref{eqn:limit}, by Lemma \ref{critical identities} (replace $n$ in \eqref{Jacobi formula 1}-\eqref{Jacobi formula 4} with $N-n$), we have
\begin{equation*}
\lim_{\epsilon\rightarrow 0}\f{g_t}{g-g_x}=\f{
 e^xD_{N-n}^3+4e^{-x}D_{N-n+2}^{-1}}{D_{N-n+1}^1}, \quad x_{n-1}<x\leq x_n\,(n=1,\ldots ,N).
\end{equation*}
For $n=N+1$, taking account of $D_0^m=1$, $D_1^0=A_0$, we have
 \[\lim_{\epsilon\rightarrow 0}\f{g_t}{g-g_x}=2e^{-x}D_{1,t}^0=2\sum_{i=1}^Nc_iz_i^{-1},\quad
 x>x_N.\]

Now we conclude that  under appropriate scaling transformations,  $\f{g_t}{g-g_x}$ converges uniformly to $\bar{u}(x;t)$  when $\epsilon\rightarrow 0$ \,i.e. $\k\rightarrow 0$, which completes the proof.

\end{proof}

\begin{proof}[\bf{Proof of Theorem \ref{main result}}]

By Theorem \ref{parameter limit}, we need to show that
\begin{equation}\label{u_n}
  u_n= \sum_{i=1}^{n-1}m_ie^{-(x-x_i)}+\sum_{i=n}^Nm_ie^{-(x_i-x)},\quad n=1,\ldots ,N+1
\end{equation}
with $x_i,m_i$ given by \eqref{m_i}.

For $n=N+1$, by $D_{1,t}^0=2D_1^{-1}$, \eqref{u_n} is equivalent to
\[\sum_{i=0}^{N-1}\f{(D_{i+1}^0)^2}{D_{i+1}^1D_i^1=D_1^{-1}}.\]
For $n=1,\ldots,N$, \eqref{u_n} is equivalent to
\[\sum_{i=0}^{N-n}\f{(D_i^2)^2}{D_{i+1}^1D_i^1}=\f{D_{N-n}^3}{D_{N-n+1}^1},\qquad
\sum_{i=N-n+1}^{N-1}\f{(D_{i+1}^0)^2}{D_{i+1}^1D_i^1}=\f{D_{N-n+2}^{-1}}{D_{N-n+1}^1}.
 \]
The three identities follow readily from Corollary \ref{identities for main theorem} by replacing $n$ in \eqref{formula} with $-1$ and $N-n$, respectively.
\end{proof}

\section{Proof of Theorem \ref{new expression}}\label{proof2.1}

Consider the following determinant
\begin{equation}\label{A}
  A= \begin{vmatrix}
     e^{2\xi_1}+1&  e^{2\xi_2}-1 &  \cdots &e^{2\xi_N}+(-1)^{N-1} & (2\k)^N\\
     k_1e^{2\xi_1}-k_1&  k_2e^{2\xi_2}+k_2 &\cdots&k_Ne^{2\xi_N}+(-1)^{N-1}(-k_N)  & (2\k)^{N-1}\\
     k_1^2e^{2\xi_1}+k_1^2&  k_2^2e^{2\xi_2}-k_2^2 &\cdots&k_N^2e^{2\xi_N}+(-1)^{N-1}(-k_N)^2  & (2\k)^{N-2}\\
     \vdots & \vdots & \ddots &   \vdots &   \vdots \\
    k_1^N e^{2\xi_1}+(-k_1)^N& k_2^N e^{2\xi_2}-(-k_2)^N & \cdots &k_N^Ne^{2\xi_N}+(-1)^{N-1}(-k_N)^N &1\\
    \end{vmatrix}.
\end{equation}
 Easy to find that the determinant $A$ equals to the sum of all the terms $e^{2\xi_{i_1}+\cdots+2\xi_{i_n}}\,(n=1,2,\ldots,N)$ and the term without exponential functions.

\begin{lemma}\label{eqn:A}
For any $N\geq1$, the determinant $A$ can be expressed as follows:
  \begin{equation}\label{A-conjection}
    A=\prod_{i<j}^N(k_j-k_i)\prod_{i=1}^N(1+2\k k_i)+\sum_{n=1}^Ne^{2\xi_I}\Gamma_I\Gamma_J\prod_{i\in I,\, j\in J}(1-2\kappa k_i)(1+2\k k_j)(k_i+k_j).
\end{equation}
\end{lemma}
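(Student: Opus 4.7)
The plan is to expand $A$ by multilinearity of the determinant in its first $N$ columns and then identify each resulting term as a Vandermonde determinant. For each $s\in\{1,\ldots,N\}$, write column $s$ as the sum of a ``wave'' vector $e^{2\xi_s}(k_s^{r-1})_{r=1}^{N+1}$ and a ``constant'' vector $(-1)^{s-1}((-k_s)^{r-1})_{r=1}^{N+1}$. Multilinearity decomposes $A=\sum_{I\subseteq\{1,\ldots,N\}}\det M_I$, where $M_I$ selects the wave vector in the columns $s\in I$ and the constant vector in the columns $s\in J:=\{1,\ldots,N\}\setminus I$, leaving the last column untouched. The $I=\emptyset$ contribution should yield the exponential-free first term of \eqref{A-conjection}, while the $I$'s of cardinality $n$ together furnish the $n$-th term of the sum.

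For each $I$, pull the scalars $e^{2\xi_I}\prod_{s\in J}(-1)^{s-1}$ out of $M_I$. The surviving matrix $V_I$ has first $N$ columns of Vandermonde type $(x_s^{r-1})_{r=1}^{N+1}$ with $x_s=k_s$ for $s\in I$ and $x_s=-k_s$ for $s\in J$, and last column $((2\k)^{N+1-r})_{r=1}^{N+1}$. Factoring $(2\k)^N$ out of the last column exhibits $V_I/(2\k)^N$ as the $(N+1)\times(N+1)$ Vandermonde in variables $x_1,\ldots,x_N$ together with $x_{N+1}=1/(2\k)$. Applying the Vandermonde formula and absorbing the prefactor $(2\k)^N$ back into the linear factors containing $x_{N+1}$ gives
\[
\det V_I=\prod_{1\leq a<b\leq N}(x_b-x_a)\cdot\prod_{i\in I}(1-2\k k_i)\prod_{j\in J}(1+2\k k_j).
\]

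It remains to expand $\prod_{a<b}(x_b-x_a)$ pair by pair, separating the four cases ``both $a,b\in I$'', ``both $a,b\in J$'', ``$a\in I,b\in J,\,a<b$'', and ``$a\in J,b\in I,\,a<b$''. Collecting yields $\Gamma_I\Gamma_J\prod_{i\in I,\,j\in J}(k_i+k_j)$ up to a sign $(-1)^{\binom{|J|}{2}+m}$, where $m:=\#\{(i,j):i<j,\ i\in I,\ j\in J\}$. The crux is to verify that this sign cancels the column-sign factor $\prod_{s\in J}(-1)^{s-1}=(-1)^{\sum_{s\in J}(s-1)}$, which boils down to the combinatorial identity
\[
\sum_{s\in J}(s-1)=\binom{|J|}{2}+m.
\]
This holds because each $s\in J$ has $s-1$ predecessors in $\{1,\ldots,N\}$, which split into $J$-predecessors (contributing $\binom{|J|}{2}$ when summed over $s\in J$) and $I$-predecessors (contributing $m$). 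Once the signs are reconciled, summing over $I\subseteq\{1,\ldots,N\}$ and isolating the $I=\emptyset$ summand produces \eqref{A-conjection}. The main obstacle is precisely this sign bookkeeping; the multilinearity expansion and the Vandermonde evaluation themselves are entirely routine.
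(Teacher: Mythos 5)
Your proposal is correct and follows essentially the same route as the paper: both decompose each of the first $N$ columns into its exponential and constant parts, expand by multilinearity, reduce each resulting determinant to a Vandermonde determinant in the variables $k_i\,(i\in I)$, $-k_j\,(j\in J)$, $1/(2\k)$, and then reconcile the signs coming from the column factors $(-1)^{s-1}$ with those from the Vandermonde product. Your sign bookkeeping via the identity $\sum_{s\in J}(s-1)=\binom{|J|}{2}+m$ is a slightly cleaner organization of the same computation the paper performs with the exponent $\tfrac{N(N-1)}{2}-(i_1+\cdots+i_n-n)$, but the argument is the same.
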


\begin{proof} We only need to show that the term without exponential functions equals to \begin{equation}\label{A-constant part}
\prod_{i<j}^N(k_j-k_i)\prod_{i=1}^N(1+2\k k_i),
\end{equation}
and for any $1\leq n\leq N$,\; I=$\{i_1<i_2\cdots<i_n\}$, the coefficient of the term $e^{2\xi_{i_1}+\cdots+2\xi_{i_n}}$ can be given by
\begin{equation}\label{A-exponent part}
\Gamma_I\Gamma_J\prod_{i\in I,\, j\in J}(1-2\k k_i)(1+2\k k_j)(k_i+k_j).
\end{equation}

 Let
\begin{equation*}
    \alpha_i=e^{2\xi_i}\left[
               \begin{array}{c}
                 1 \\
                 k_i \\
                 \vdots \\
                 k_i^N \\
               \end{array}
             \right],
             \qquad \beta_j=\left[
               \begin{array}{c}
                 (-1)^{j-1} \\
                (-1)^{j-1}(- k_j) \\
                 \vdots \\
                 (-1)^{j-1}(- k_j)^N\\
               \end{array}
             \right],\qquad\gamma=\left[
                                    \begin{array}{c}
                                      (2\k)^N \\
                                    (2\k)^{N-1} \\
                                      \vdots \\
                                      1 \\
                                    \end{array}
                                  \right]
\end{equation*}
then
\[A=|\alpha_1+\beta_1\quad \alpha_2+\beta_2\quad \cdots\quad\alpha_N+\beta_N\quad \gamma|,\]
and the term without exponential function equals to $|\beta_1\quad \beta_2\quad \cdots\quad\beta_N\quad \gamma|\triangleq \tilde{A}$.

Computing the determinant $\tilde{A}$, we have
\begin{equation}\label{A_0}
   \tilde{A}= (-1)^{\f{N(N-1)}{2}}\prod_{i<j}^N(k_i-k_j)\prod_{i=1}^N(1+2\k k_i)
       =\prod_{i<j}^N(k_j-k_i)\prod_{i=1}^N(1+2\k k_i),
\end{equation}
we have used the convention: $k_1<k_2<\cdots<k_N$ in the last step.

Replacing the columns of $\tilde{A}$ with indices $i_1,\ldots,i_n$ by $\alpha_{i_1},\ldots,\alpha_{i_n}$, respectively, we obtain a determinant B,
then the coefficient of the term $e^{2\xi_{i_1}+\cdots+2\xi_{i_n}}$ in determinant A can be given by B. Let $\beta_j^{\prime}=(-1)^{j-1}\beta_j\,(j\in J)$, $\gamma^{\;\prime}=(2\k)^{-N}\gamma$, replacing the columns of B with indices $j\,(j\in J),\,N+1$ by $\beta_j^{\prime}\,(j\in J)$ and $\gamma^{\;\prime}$, respectively, we obtain a Vandermonde determinant $C$. Besides, according to the operations above, we have
\begin{equation*}
    B=(-1)^{\f{N(N-1)}{2}-(i_1+\cdots+i_n-n)}(2\k)^NC,
\end{equation*}
with
\begin{equation*}
    \begin{split}
      C= & (-1)^{\f{(N-n)(N-n-1)}{2}}\Gamma_I\Gamma_J\prod_{l=1}^n\left(\prod_{j\in J,\,j<i_l}(k_{i_l}+k_j)
    \prod_{j\in J,\,j>i_l}(-k_j-k_{i_l})\right)\prod_{i\in I\,j\in J}(\f{1}{2\k}-k_i) (\f{1}{2\k}+k_j) \\
    =& (-1)^{\f{(N-n)(N-n-1)}{2}}\Gamma_I\Gamma_J (-1)^{nN-(i_1+\cdots+i_n)-\f{n(n-1)}{2}}\prod_{i\in I,\,j\in J}(k_i+k_j)(\f{1}{2\k}-k_i) (\f{1}{2\k}+k_j),
    \end{split}
\end{equation*}
 we have applied the convention: $i_1<i_2<\ldots<i_n$ and $k_1<k_2<\cdots<k_N$ multiple times to adjust the factors involving -1. Therefore
\[B=\Gamma_I\Gamma_J\prod_{i\in I,\, j\in J}(1-2\k k_i)(1+2\k k_j)(k_i+k_j),\]
which together with \eqref{A_0} completes  the proof of Lemma \ref{eqn:A}.
\end{proof}
Let $A^\prime$ be the determinant given by replacing the $N+1$ column of $A$ by
 \[\delta=\left(
           \begin{array}{cccc}
             1 & \f{1}{-2\k} & ( \f{1}{-2\k})^2 & ( \f{1}{-2\k})^N \\
           \end{array}
         \right)^T,
\]
Similar to the proof of  Lemma \ref{eqn:A}, we have
\begin{equation}\label{A prime-conjection}
 A^\prime=\prod_{i<j}^N(k_j-k_i)\prod_{i=1}^N(1-2\k k_i)+\sum_{n=1}^Ne^{2\xi_I}\Gamma_I\Gamma_J\prod_{i\in I,\, j\in J}(1+2\k k_i)(1-2\k k_j)(k_i+k_j).
\end{equation}
Furthermore, it is easy to check that \eqref{A-conjection} and \eqref{A prime-conjection} can be rewritten as
\begin{equation*}
A=\prod_{i<j}^N(k_j-k_i)\prod_{i=1}^N(1+2\k k_i)g_1,\quad\quad A^\prime=\prod_{i<j}^N(k_j-k_i)\prod_{i=1}^N(1-2\k k_i)g_2,
  \end{equation*}
with $g_1$  and $g_2$ given by \eqref{g_1} and \eqref{g_2}, respectively.
Taking account of the following identities
\begin{equation*} (2\k)^N2^N\exp\left(\sum_{i=1}^N\xi_i\right)W(\Phi_1,\ldots,\Phi_N,e^{\frac{y}{2\k}})
=e^{\frac{y}{2\k}}A,
 \end{equation*}
\begin{equation*}
(-2\k)^N2^N\exp\left(\sum_{i=1}^N\xi_i\right)W(\Phi_1,\ldots,\Phi_N,e^{-\frac{y}{2\k}})
=e^{-\frac{y}{2\k}}A^\prime,
  \end{equation*}
We obtain
\begin{equation}\label{relationship}
    (-1)^N\f{f_1}{f_2}=e^{\f{y}{\k}}\f{g_1}{g_2}\prod_{i=1}^Na_i.
\end{equation}
Note that $g_1>0,\,g_2>0$, hence,
\[\Big|\f{f_1}{f_2}\Big|=e^{\f{y}{\k}}\f{g_1}{g_2}\prod_{i=1}^Na_i,\]
For any given $N$,
 \begin{equation}\label{r}
    r=\f{\k f_1f_2}{\prod_{i=1}^N(k_i^2-\f{1}{4\k^2})}
\end{equation}
(see \cite{LZ, LYS})  is positive by the equation \eqref{relationship}. Thus, the reciprocal transformation:\,$dy=rdx-urdt$ has the inverse:\,$dx=\frac{1}{r}dy+udt$, and
\begin{equation}\label{reciprocal inverse}
    \f{\partial x}{\partial y}=\f{1}{r(y,t)},\qquad \f{\partial x}{\partial t}=u(y,t).
\end{equation}
is integrable, therefore,
\begin{equation*}
    \left(\f{1}{r}-\f{1}{\k}\right)_t=\left(\f{1}{r}\right)_t
    =u_y=\left(\ln\f{g_1}{g_2}\right)_{ty}.
\end{equation*}
Since $u,\,u_x,\,u_{xx}\rightarrow 0,\,r\rightarrow \k$ when $|x|\rightarrow\infty$, we have $u,\,u_y\rightarrow0,\,r\rightarrow \k$ when $|y|\rightarrow\infty$, which leads to
\begin{equation*}
    \left(\ln\f{g_1}{g_2}\right)_y+\beta(y)=\f{1}{r}-\frac{1}{\k},
\end{equation*}
where $\beta(y)$ is an arbitrary function of $y$, and $\beta(y)\rightarrow 0$ as $|y|\rightarrow\infty$, particularly, we can choose $\beta(y)=0$.
Thus, by the first equation in \eqref{reciprocal inverse}, we have \eqref{x(y,t)} with
 $\alpha$ an arbitrary integral constant.

We now complete the proof of Theorem \ref{new expression}.
\begin{remark}
In \cite {P2}, Parker provided the N-soliton solutions via Hirota bilinear method up to $N=4$ by means of Mathematica, these solutions take the same form as Theorem \ref{new expression}, but the proof for the general case $N>4$ has not been given.
\end{remark}

\section{Concluding remarks}
In this paper, we provide a new representation for the $N$-soliton solutions of the dispersion CH equation, obtained via Darboux transformation by Li. The new representation also provides a rigorous proof for bilinear solution of dispersion CH equation (see Theorems \ref{new expression} and \ref{equivalence}), which completes the work in \cite{P1, P2}. We also show that the $N$-peakon of the dispersionless CH equation can be recovered from a sequence of smooth $N$-soliton solutions of \eqref{kCH}, which establishes the relation between $N$-soliton solutions of \eqref{kCH} and $N$-peakon solutions of \eqref{CH eqn} for any $N\geq 1$. We believe that there is certain physics behind the mathematics we proved, while we still cannot explain it completely.

 In \cite{XZQ}, the authors also obtained smooth soliton solutions of \eqref{kCH} via Darboux transformation, while for the occurrence of the first and second derivatives of Wronskian, the multi-soliton they construct is very complicate, and it is difficult to investigate the convergence. Whether the multi-soliton solutions given by them are equivalent to the one studied in this paper is still unclear. If so, we may obtain the transformation between peakon solution of \eqref{CH eqn}, which is attractive.

 At present, the existence of soliton solutions with both positive and negative asymptotic speeds is still not clear, though the peakon-antipeakon (amplitudes $m_i$  in \eqref{N-peakon} have different signs) solutions of the dispersionless CH equation \eqref{CH eqn} have been given by the same formulas \eqref{N-peakon} in Lemma \ref{N-peakon}.  In \cite{LYS1}, the authors investigated multi-soliton solutions of dispersive CH equation, the formulas in \cite{LZ, LYS} may not give smooth soliton solutions,  the existence of soliton solutions with both positive and negative asymptotic speeds and the relation to the peakon-antipeakon solutions both are interesting topics for further study.


\end{document}